\def \Im   {\text {\rm Im}}
\def \Tr   {\text {\rm Tr}}
\def \Re   {\text {\rm Re}}
\begin{document}
\title*{Quantum Stochastic Calculus and Quantum Gaussian Processes}
% Use \titlerunning{Short Title} for an abbreviated version of
% your contribution title if the original one is too long
\author{K. R. Parthasarathy}
% Use \authorrunning{Short Title} for an abbreviated version of
% your contribution title if the original one is too long
\institute{Indian Statistical Institute, Delhi 
Centre, 7, S. J. S. Sansanwal Marg, New Delhi 110016, 
\texttt{krp@isid.ac.in}}
%\and Name of Second Author \at Name, Address ofInstitute
%\email{name@email.address}}
%
% Use the package "url.sty" to avoid
% problems with special characters
% used in your e-mail or web address
%
\maketitle
% \vskip0.1in
% \begin{center}
% {\it Dedicated to Professor Kalyan B. Sinha on his 70th birthday}
% \end{center}
\vskip0.1in

\abstract{In this lecture we present a brief outline of boson Fock space 
stochastic calculus based on the creation, conservation and annihilation 
operators of free field theory, as given in the 1984 paper of Hudson 
and 
Parthasarathy \cite{9}. We show how a part of this architecture yields 
Gaussian fields stationary under a group action. Then we introduce the notion 
of semigroups of quasifree completely positive maps on the algebra of all 
bounded operators in the boson Fock space $\Gamma (\mathbb{C}^n)$ over 
$\mathbb{C}^n.$  These semigroups are not strongly continuous but their preduals 
map  Gaussian states to Gaussian states. They were first introduced and their 
generators were shown to be of the Lindblad type by Vanheuverzwijn \cite{19}. 
They were recently investigated in the context of quantum information theory by 
Heinosaari, Holevo and Wolf \cite{7}. Here we present the exact noisy 
Schr\"odinger equation which dilates such a semigroup to a quantum Gaussian 
Markov process.}

%\keywords{}

\section{Introduction}
\label{sec:1}

Consider a system whose state at any time $t$ is described by $n$  real 
coordinates 
$(\xi_1 (t), \xi_2(t), \ldots, \xi_n (t)).$ As an example  one may look at the 
system of a single particle moving in the space $\mathbb{R}^3$ and its state 
consisting of six coordinates, three for its position and three for its 
velocity components. Suppose a characteristic of the system described by a 
function $f$ of the state is being studied. Then such a characteristic changes 
with time and we write

\begin{equation}
X (t) = f (\xi_1 (t), \ldots, \xi_n (t)). \label{eq1.1}
\end{equation}
Such a change is described by its differential
\begin{equation}
d X (t) = \sum_{j=1}^{n} \,\frac{\partial f}{\partial \xi_j} 
(\xi_1 (t), \ldots, \xi_n (t)) \,d \xi_j(t).    \label{eq1.2}
\end{equation}
If $Y(\cdot)$ is another characteristic such that
\begin{equation}
Y(t) = g (\xi_1 (t), \ldots, \xi_n (t))    \label{eq1.3}
\end{equation}
then
\begin{equation}
d X (t) Y(t) = X (t) d Y(t) + Y (t) d X (t).    \label{eq1.4}
\end{equation}
Note that \eqref{eq1.2} is interpreted as
\begin{equation}
X (t+h) - X (t) = \int_t^{t+h} \,\,\sum_{j=1}^{n} \,\,\frac{\partial 
f}{\partial \xi_j} (\xi_1 (t), \ldots, \xi_n (t)) \,\frac{d \xi_j}{dt} \,dt     
\label{eq1.5}
\end{equation}
In two seminal papers in 1944 and 1951, K. Ito \cite{11}, \cite{12} developed a 
method for such a differential description when the path $$\bm{\xi}(t)=(\xi_1 
(t,\omega), \xi_2 (t, \omega), \ldots, \xi_n(t,\omega))$$ is random, i.e., 
subject to the laws of chance and the randomness is described by points 
$\omega$ in a probability space. Two paradigm examples of such random paths or 
trajectories come to our mind: the continous Brownian motion executed by a 
small particle suspended in a fluid and the jump motion of the number of 
radioactive particles emitted during the time interval $[0, t]$ by a 
radioactive substance undergoing radioactive decay. The first example yields a 
Gaussian process with independent increments and the second yields a Poisson 
jump process.

To begin with we consider a standard Brownian motion process $\bm{B}(t)  =  
(B_1(t),$ $ B_2 (t), \cdots, B_n(t))$ where $B_i$'s  are independent and each $\{ 
B_j(t), t \ge 0 \}$ is a standard Brownian motion process with continuous 
sample paths and independent increments, $B_j(0)=0$ and $B_j(t)$ being normally 
distributed with mean $0$ and variance $t.$ To do a differential analysis of 
functionals of such a Brownian motion one has to define stochastic integrals 
with respect to such Brownian paths which are known to be of
unbounded 
variation. Integrands in such a theory of stochastic integration are the {\it 
Ito functionals} or {\it nonanticipating functionals}. They are of the form 
$$f (t, \bm{B}(\cdot)) = f (t, \{ B (s), 0 \leq s \leq t \}), \quad 0 \leq t < 
\infty$$
which take real or complex values. In other words the random variable $f (t, 
\bm{B}(\cdot))$ depends on $t$ and the whole Brownian path in the interval 
$[0,t].$ For $n+1$ such Ito functionals $f_1, f_2, \ldots, f_n,$ $g$ the Ito 
theory associates an integral of the form
\begin{equation}
X (t) = X(0) + \sum_{j=1}^n \int_0^t f_j (s, \bm{B}) d B_j (s) + \int_0^t g (s, 
\bm{B}) ds   \label{eq1.6}
\end{equation}
and shows that for a large linear space of such vector-valued functionals 
$(f_1, f_2, \ldots, f_n,\, g),$ such integrals are well-defined and have many 
interesting properties. If \eqref{eq1.6} holds for every $t$ in an interval 
$[0,T]$ we write
\begin{equation}
d X (t) = \sum_{j=1}^n f_j (t,\bm{B}) d B_j (t) + g (t,\bm{B}) dt     
\label{eq1.7}
\end{equation}
for $0 \leq t \leq T$ with the prescribed  initial  value $X(0).$ If we have 
another relation of the form \eqref{eq1.7}, say,
\begin{equation}
d Y(t) = \sum_{j=1}^{n} h_j (t, \bm{B}) d B_j (t) + k (t, \bm{B}) dt   
\label{eq1.8}
\end{equation}
in $[0,T]$ with an initial value $Y(0),$ then both $X(t) = X(t, \bm{B}),$ $Y(t) 
= Y(t, \bm{B})$ are again Ito functionals and one can ask what is the 
differential of the product Ito functional $X(t) Y(t), 0 \leq t \leq T.$ The 
famous {\it Ito's formula} states that
\begin{equation}
d X(t) Y(t) = X(t) d Y(t) + Y(t) dX(t) + d X(t) dY(t)    \label{eq1.9}
\end{equation}
where
\begin{equation}
d X (t) d Y(t) = \left \{\sum_{j=1}^n \,f_j (t, \bm{B}) h_j (t, \bm{B}) \right 
\} dt.    \label{eq1.10}
\end{equation}
This is best expressed in the form of a multiplication table for the 
fundamental differentials $dB_j,$ $1 \leq j \leq n$ and $dt:$
\begin{equation}
\begin{tabular}{c|ccccc} \\ 
     & $d B_1$ & $dB_2$ & $\cdots$ & $dB_n$ & $dt$ \\ \hline
$dB_1$ & $dt$ & $0$ & $\cdots$ & $0$ & $0$ \\ 
$dB_2$ & $0$ & $dt$ & $\cdots$ & $0$ & $0$ \\
$\vdots$ & $\vdots$ & $\vdots$ & & $\vdots$ & $\vdots$ \\
$dB_n$ & $0$ & $0$ & $\cdots$ & $dt$ & $0$ \\
$dt$ & $0$ & $0$ & $\cdots$ & $0$  & $0$ 
\end{tabular}
    \label{eq1.11}
\end{equation}
Here the last diagonal entry and all the nondiagonal entries are $0.$ This 
multiplication table also implies that for any twice continuously 
differentiable function $\varphi (t, x_1, x_2, \ldots, x_n)$ on $\mathbb{R} 
\times \mathbb{R}^n,$ the Ito functional $X(t) = \varphi (t, B_1 (t), \ldots, 
B_n(t))$ satisfies
\begin{equation}
dX(t) = \sum_{j=1}^{n} \frac{\partial \varphi}{\partial x_j} (t, B_1 (t), 
\ldots, B_n (t)) d B_j (t) + \left \{\frac{\partial \varphi}{\partial t} + 
\frac{1}{2} \sum_{j=1}^n \frac{\partial^2 \varphi}{\partial x_j^2}  \right \} 
(t, \bm{B} (t)) dt.   \label{eq1.12}
\end{equation}
Formulae \eqref{eq1.9}-\eqref{eq1.12} constitute the backbone of the Ito 
stochastic calculus and its diverse applications.

Ito calculus has been extended to all local semimartingales and it is an 
extraordinarily rich theory with applications to many areas ranging from 
physics and biology to economics and social sciences. For the Poisson process 
$\{N_{\lambda} (t), t \ge 0 \}$ with intensity parameter $\lambda$ the 
multiplication table for differentials has the form
\begin{equation}
\begin{tabular}{c|cc}
 & $d N_{\lambda}$ & $dt$ \\    \hline
$dN_{\lambda}$ & $dN_{\lambda}$ & $0$ \\
$dt$ & $0$ & $0$
\end{tabular}
     \label{eq1.13}
\end{equation}
For a comprehensive account of classical stochastic calculus we refer to 
\cite{10}.

Coming to quantum theory we observe that both chance and noncommutativity of 
observables play an important role. A quantum stochastic process may be roughly 
described by a family $\{X_t\}$ of observables or, equivalently, selfadjoint 
operators in a complex and separable Hilbert space $\mathcal{H}$ together with 
a state $\rho$ which is a  nonnegative operator of unit trace in $\mathcal{H}.$ 
The operators $X_t$ at different time points may not commute with each other. 
However, one would like to have a `differential' description of $\{X_t\}$ in 
terms of the differentials of some fundamental processes which may be viewed as 
quantum analogues of processes like Brownian motion and Poisson process. Then 
the differential description  will depend on integrals of operator-valued 
processes with respect to the fundamental processes. Borrowing from the fact 
that there is a close connection between infinitely divisible distributions, 
classical stochastic processes with independent increments and free quantum 
fields on a boson Fock space as outlined in the papers of H. Araki \cite{1}, R. 
F. Streater \cite{18}, K. R. Parthasarathy and K. Schmidt \cite{17} we search 
for observables from free field theory to provide us the fundamental processes 
and their differentials. Our aim would then be to describe a quantum Ito's 
formula or an Ito table similar to \eqref{eq1.11} and \eqref{eq1.13}. This goal 
is achieved in the paper \cite{9} by Hudson and Parthasarathy. We shall follow 
\cite{9} and use it to construct examples of quantum Gaussian processes of the 
quasifree type \cite{19}, \cite{7} as solutions of quantum stochastic 
differential equations.

\section{Boson Fock space and Weyl operators}
\label{sec:2}
\setcounter{equation}{0}

All the Hilbert spaces we deal with will be assumed to be complex and separable 
and scalar products will be expressed in the Dirac notation. To any Hilbert 
space $\mathcal{H}$ we associate its {\it boson Fock space} 
$\Gamma(\mathcal{H})$ defined by
\begin{equation}
\Gamma (\mathcal{H}) = \mathcal{C} \oplus \mathcal{H} \oplus 
\mathcal{H}^{\textcircled{s}^{2}} \oplus \cdots \oplus 
\mathcal{H}^{\textcircled{s}\,^{n}} \oplus \cdots .   \label{eq2.1}
\end{equation}
where $\mathcal{C}$ is the $1$-dimensional Hilbert space of complex numbers and 
$\mathcal{H}^{\textcircled{s}^{n}}$ is the $n$-fold symmetric tensor 
product of copies of $\mathcal{H}.$ The subspace 
$\mathcal{H}^{\textcircled{s}^{n}}$ in $\Gamma(\mathcal{H})$ is called the 
$n$-{\it particle} subspace and $\mathcal{C}$ is called the {\it vacuum} 
subspace. For any 
$u \in \mathcal{H},$ define the {\it  exponential vector} $e(u)$ associated with 
$u$ by

\begin{equation}
e (u) = 1 \oplus u \oplus \frac{u^{\otimes^{2}}}{\sqrt{2!}}  \oplus \cdots 
\oplus \frac{u^{\otimes^{n}}}{\sqrt{n!}}  \oplus \cdots \label{eq2.2}
\end{equation}
and observe that
\begin{equation}
\langle e (u) | e (v) \rangle = \exp \langle u | v \rangle \quad \forall \,\, 
u, v \in \mathcal{H}.   \label{eq2.3}
\end{equation}
Denote by $\mathcal{E}$ the linear manifold generated by the set of all 
exponential vectors in $\Gamma (\mathcal{H})$ and call $\mathcal{E}$ the {\it 
exponential domain.} Then the following properties hold:
\begin{itemize}
 \item[(i)]\quad $\mathcal{E}$ is dense in $\Gamma(\mathcal{H});$

 \item[(ii)]\quad Any finite set of exponential vectors is linearly independent;

 \item[(iii)]\quad The correspondence $u \rightarrow e (u)$ is strongly 
continuous.

 \item[(iv)] \quad Any bounded operator $A$ in $\Gamma(\mathcal{H})$ is 
uniquely determined by the map \qquad \qquad $u \rightarrow A e (u),$ $u \in 
\mathcal{H}.$

 \item[(v)]\quad Any correspondence $e (u) \rightarrow A e(u),$ $u \in 
\mathcal{H}$ extends to a densely defined operator with the dense domain 
$\mathcal{E}.$

 \item[(vi)]\quad $e(0) = 1 \oplus 0 \oplus 0 \oplus \cdots$ is the vacuum 
vector and 
\begin{equation}
\psi (u) = e^{-\|u\|^2/2} e(u) \label{eq2.4}
\end{equation}
is a unit vector. The pure state with density operator $|\psi(u)\rangle \langle 
\psi (u)|$ is called the {\it coherent state} associated with $u.$

\item[(vi)] \quad The correspondence
\begin{equation}
W(u) e(v) = e^{-1/2 \|u\|^{2} - \langle u|v \rangle} e (u+v) \quad \forall \,\, 
v     \label{eq2.5}
\end{equation}
is scalar product preserving on the set of all exponential vectors for every 
fixed $u$ and therefore extends uniquely to a unitary operator $W(u)$ on 
$\Gamma (\mathcal{H}).$ The operator $W(u)$ is called the {\it Weyl operator} 
associated with $u.$
\end{itemize}

The Weyl operators play a central role in quantum theory and particularly in 
quantum stochastic calculus. They obey the following multiplication relations:
\begin{eqnarray}
 W(u) W(v) &=& e^{-i \,\,\Im \langle u|v \rangle} W (u+v), \label{eq2.6}\\
W(u) W(v) W(u)^{-1} &=& e^{-2i \,\Im \langle u|v \rangle} W(v) \label{eq2.7}
\end{eqnarray}
for all $u,v$ in $\mathcal{H}.$ The correspondence $u \rightarrow W(u)$ is 
strongly continuous. There is no proper subspace of $\Gamma(\mathcal{H})$ 
invariant under all the Weyl operators. We summarize by saying that the 
correspondence $u \rightarrow W(u)$ is a projective unitary and irreducible 
representation of the additive group $\mathcal{H}.$ Furthermore, the 
correspondence $t \rightarrow W(tu),$ $t \in \mathbb{R}$ is a strongly 
continuous unitary representation of $\mathbb{R}$ and hence by Stone's 
theorem there exists a unique selfadjoint operator $p(u)$ such that
\begin{equation}
W(tu) = e^{-it \,p(u)}, t \in \mathbb{R}, u \in \mathcal{H}. \label{eq2.8}
\end{equation}

If $\mathcal{H}$ is finite dimensional and $u \rightarrow W^{\prime}(u)$ is a 
strongly continuous map from $\mathcal{H}$ into the unitary group of a Hilbert 
space $K$ obeying the multiplication relations \eqref{eq2.6} with 
$W$ replaced 
by $W^{\prime}$ then there exists a Hilbert space ${h}$ and a Hilbert space 
isomorphism $U:K \rightarrow \Gamma (\mathcal{H}) \otimes h$ such 
that
$$U W^{\prime}(u) U^{-1} = W (u) \otimes I_h \quad \forall \,\,u \in 
\mathcal{H}. $$
In particular, if $W^{\prime}(\cdot)$ is also irreducible then $h$ is one 
dimensional and $W(\cdot)$ and $W^{\prime}(\cdot)$ are unitarily equivalent 
through $U.$ This is the well-known Stone-von Neumann theorem.

From \eqref{eq2.6} and \eqref{eq2.8} one obtains the following commutation 
relations:
\begin{equation}
[p(u), p(v)] = 2 i \,\Im \langle u|v \rangle \quad \forall \,\, u, v \in 
\mathcal{H} \label{eq2.9}
\end{equation}
on the domain $\mathcal{E}.$ The domain $\mathcal{E}$ is a common core for all 
the selfadjoint operators $p(u)$ and  $\mathcal{E}$ is contained in the domain 
of all products of the form $p(u_1) p(u_2) \ldots p(u_n),$ $u_i \in 
\mathcal{H}.$ Writing
\begin{eqnarray}
 q(u) &=& -p (iu) \label{eq2.10} \\
a(u) &=& \frac{1}{2} (q(u) + i p (u)) \label{eq2.11}\\
a^{\dagger} (u) &=& \frac{1}{2} (q(u) - i p (u)) \label{eq2.12}
\end{eqnarray}
we obtain closable operators in the domain $\mathcal{E}$ obeying the following 
commutation relations:
\begin{eqnarray}
 [a(u), a(v)] &=& 0, \\ \nonumber 
[a^{\dagger} (u), a^{\dagger} (v)] &=& 0,\\ \nonumber 
[a(u), a^{\dagger} (v)] &=& \langle u | v \rangle. \label{eq2.13}
\end{eqnarray}
Furthermore the correspondence $u \rightarrow a (u)$ is antilinear, $u 
\rightarrow a^{\dagger} (u)$ is linear,
\begin{eqnarray*}
 a (u) e(v) &=& \langle u | v \rangle e (v), \\
\langle e(v)| a(u) e(w) \rangle &=& \langle a^{\dagger} (u) e(v) | e(w) \rangle
\end{eqnarray*}
for all $u, v, w$ in $\mathcal{H}.$ If we denote the closures of $a(u)$ and 
$a^{\dagger} (v)$ on $\mathcal{E}$ by the same symbols one obtains their 
actions on $n$-particle vectors as follows:
\begin{eqnarray}
 a(u) e(0) &=& 0, \\ \nonumber
a(u) v^{\otimes^{n}} &=& \sqrt{n} \langle u | v \rangle v^{\otimes^{n-1}} \\ 
\nonumber 
a^{\dagger} (u) v^{\otimes^{n}} &=& (n+1)^{-1/2} \sum_{r=0}^{n} v^{\otimes^{r}} 
\otimes u \otimes v^{\otimes^{n-r}}. \label{eq2.14}
\end{eqnarray}
These relations show that $a(u)$ maps $\mathcal{H}^{\textcircled{s}^{n}}$ into 
$\mathcal{H}^{\textcircled{s}^{n-1}}$ whereas $a^{\dagger} (u)$ maps 
$\mathcal{H}^{\textcircled{s}^{n}}$ into $\mathcal{H}^{\textcircled{s}^{n+1}}.$ 
In view of this property $a(u)$ and $a^{\dagger}(u)$ are called {\it 
annihilation} and {\it creation} operators associated with $u.$

For any selfadjoint operator $H$ in $\mathcal{H}$ with domain $D(H)$ denote by 
$\mathcal{E}(D(H))$ the linear manifold generated by $\{e(u) | u \in D (H)\}.$ 
For any unitary operator $U$ in $\mathcal{H}$ define the operator $\Gamma(U)$ 
in $\Gamma (\mathcal{H})$ by putting
$$\Gamma (U) e(u) = e (Uu), \quad u \in \mathcal{H},$$
noting that it is scalar product preserving and hence extending it to a unitary 
operator on $\Gamma (\mathcal{H}).$ Then for any two unitary operators $U,V$ in 
$\mathcal{H}$ one has the relation $\Gamma (U) \Gamma(V)=\Gamma(UV).$ Now we 
see that $\Gamma (e^{-it H}),$ $t \in \mathbb{R}$ is a strongly continuous  one 
parameter unitary group and therefore, by Stone's theorem, there exists a 
selfadjoint operator $\lambda (H)$ in $\Gamma (\mathcal{H})$ such that
\begin{equation}
\Gamma (e^{-itH}) = e^{-it \,\,\lambda (H)}, t \in \mathbb{R}.   \label{eq2.15}
\end{equation}
$\Gamma(U)$ is called the {\it second quantization} of $U$ and $\lambda(H)$ is 
called the {\it differential second quantization} of the selfadjoint operator 
$H.$ For $v \in D(H)$ one has the relation
\begin{equation}
\lambda (H) v^{\otimes^{n}} = \sum_{r=0}^{n-1} v^{\otimes^{r}} \otimes \lambda 
(H)  v \otimes v^{\otimes^{n-r-1}}   \label{eq2.16}
\end{equation}
Thus $\lambda(H)$ sends an $n$-particle vector to an $n$-particle vector and 
hence $\lambda(H)$ is called the {\it conservation operator} associated with 
$H.$

If $H$ is any bounded operator in $\mathcal{H}$ one can express
$$H = \frac{H+H^{\dagger}}{2} + i \frac{H-H^{\dagger}}{2i}$$
and put
$$\lambda (H) = \lambda \left (\frac{H+H^{\dagger}}{2} \right ) + i \lambda 
\left ( \frac{H-H^{\dagger}}{2i} \right ).$$
If we denote by $\mathcal{B}(\mathcal{H})$ the algebra of all bounded operators 
on $\mathcal{H}$ with the adjoint operation $\dagger$ as involution then the 
following hold:
\begin{itemize}
 \item[(i)] \quad The map $H \rightarrow \lambda(H)$ is linear on 
$\mathcal{B}(\mathcal{H});$
 \item[(ii)] \quad $\lambda (H)^{\dagger} = \lambda (H^{\dagger});$
 \item[(iii)] \quad On the domain $\mathcal{E}$ the following commutation 
relations hold:
\begin{eqnarray}
{\rm (a)} &&   [\lambda (H_1), \lambda (H_2)] = \lambda ([H_1, 
H_2]),  \\ \nonumber
{\rm (b)} && [a(u), \lambda(H)] = a (H^{\dagger} u ), \\ \nonumber 
{\rm (c)} && [\lambda (H), a^{\dagger} (u)] = a^{\dagger} (Hu)    
\label{eq2.17}
\end{eqnarray}
\end{itemize}
In terms of the Weyl operators and second quantization of unitary operators in 
$\mathcal{H}$ one has the following relations:
\begin{eqnarray*}
 W(u) \Gamma (U) &=& e^{-\frac{1}{2} \|u\|^{2}} e^{a^{\dagger} (u)} \Gamma (U) 
e^{-a(U^{-1} u)},\\
W(u) &=&   e^{a^{\dagger}(u)-a(u)} \\
\Gamma(U) W(u) \Gamma(U)^{-1} &=& W (Uu)
\end{eqnarray*}
for all $u \in \mathcal{H}$ and unitary operators $U$ in $\mathcal{H}.$ The 
first 
two identities are to be interpreted in a weak sense on the domain 
$\mathcal{E}.$

It is to be emphasized that the Weyl operators and second quantization yield a 
rich harvest of interestig observables $p(u)$ and $\lambda(H)$ as $u$ varies in 
$\mathcal{H}$ and $H$ varies in the set of all selfadjoint operators. In the 
next section we shall examine their statistical properties in the vacuum state.

\section{Statistics of observables arising from Weyl operators and second 
quantization}
\label{sec:3}
\setcounter{equation}{0}

First, we begin with a general remark about the mechanism by which classical 
stochastic processes arise in the quantum framework. Suppose $\mathcal{H}_0$ is 
the Hilbert space of a quantum system and $\rho$ is a state in $\mathcal{H}_0,$ 
i.e., a positive operator of unit trace. Let $\{X_t, t \in T\}$ be a commuting 
family of selfadjoint operators in $\mathcal{H}_0$ or, equivalently, 
observables. Then write
\begin{equation}
\varphi_{t_{1}, t_{2}, \ldots, t_{n}} (x_1, x_2, \ldots, x_n) = \Tr \,\rho \, 
\exp \,\,i \sum_{j=1}^{n} x_j X_{t_{j}} \label{eq3.1}
\end{equation}
for any $x_1, x_2, \ldots, x_n$ in $\mathbb{R}$ and $\{t_1, t_2, \ldots, t_n\} 
\subset T.$ Then $\varphi_{t_{1}, t_{2}, \ldots, t_{n}}$ is the characteristic 
function (or Fourier transform of a probability measure $\mu_{t_{1}, t_{2}, 
\ldots, t_{n}}$ in $\mathbb{R}^n$ or, more precisely, $\mathbb{R}^{\{t_{1}, 
t_{2}, \ldots, t_{n}\}},$ so that
\begin{equation}
\varphi_{t_{1}, t_{2}, \ldots, t_{n}} (x_1, x_2, \ldots, x_n) = 
\int_{\mathbb{R}^{n}} e^{i \bm{x} \cdot \bm{y}} \mu_{t_{1}, \ldots, t_{n}} 
(d\bm{y}).     \label{eq3.2}
\end{equation}
The family $\{ \mu_{t_{1}, \ldots, t_{n}}, \{ t_1, t_2, \ldots, t_n\} \subset T 
\}$ of finite dimensional probability distributions is consistent in the sense 
of Kolmogorov and therefore, by Kolmogorov's consistency theorem, there exists 
a unique probability measure $\mu$ on the product Borel space $\mathbb{R}^T$ 
whose projection on any $\mathbb{R}^{t_{1}, \ldots, t_{n}}$ is $\mu_{t_{1}, 
\ldots, t_{n}}.$ In other words, one obtains a classical stochastic process 
described by $\mu.$ If $T$ is a real vector space and the map $t \rightarrow 
X_t$ is linear we see that, in any state $\rho,$ one obtains a classical random 
field over $T.$

Now we choose a Hilbert space $\mathcal{H}$ and fix an orthonormal basis 
$\{e_1, e_2, \ldots\}$ in $\mathcal{H}.$ Define $\mathcal{A}_{\mathbb{R}}$ to 
be the real linear space of all bounded selfadjoint operators in $\mathcal{H}$ 
and fix a subspace $\mathcal{A}_0 \subset \mathcal{A}_{\mathbb{R}}$ where any 
two elements of $\mathcal{A}_0$ commute with each other. In the boson Fock 
space $\Gamma (\mathcal{H})$ defined by \eqref{eq2.1} in Section 2 consider the 
following three families of observables:

\begin{itemize}
 \item[(i)] \quad $\{p(u) | u \in \mathcal{H}_{\mathbb{R}} \},$
 \item[(ii)] \quad $\{q (u) | u \in \mathcal{H}_{\mathbb{R}} \},$
 \item[(iii)] \quad $\{\lambda (H) | H \in \mathcal{A}_0 \},$\\
where $\mathcal{H}_{\mathbb{R}}$ is the closed real linear subspace spanned by 
$e_i, e_2, 
\ldots.$
\end{itemize}

Then each of the three families above is commutative, thanks to \eqref{eq2.8}, 
\eqref{eq2.9}, \eqref{eq2.10}, \eqref{eq2.15} and (17) %\eqref{eq2.17} in 
in Section 2. Furthermore, 
the correspondence $u \rightarrow p(u),$ $u \rightarrow q(u)$ and $H 
\rightarrow \lambda (H)$ are all real linear. Of course, one must take care of 
the fact that the operators involved are unbounded. By our general remarks at 
the beginning of this section each of the three families (i)-(iii) of 
observables determines a classical random field in any fixed state $\rho$ in 
$\Gamma(\mathcal{H}).$

We now specialize to the case when $\rho = |\psi (u_0) \rangle \langle \psi 
(u_0) |$ is the coherent state corresponding to $u_0$ defined by \eqref{eq2.2} 
and \eqref{eq2.4} in Section 2. Note that
\begin{eqnarray*}
 \lefteqn{\langle \psi (u_0) | e^{-it p(u)} | \psi (u_0) \rangle} \\
&=& \langle \psi(u_0) | W (tu) | \psi (u_0) \rangle \\
&=& e^{2it \,\Im \langle u_0 | u \rangle - \frac{1}{2} t^{2} \|u\|^{2}}
\end{eqnarray*}
 which is the characteristic function of the normal distribution with mean $2 
\,\Im \langle u_0 | u \rangle$ and variance $\|u\|^2.$ This also implies that
\begin{eqnarray*}
\lefteqn{\Tr \,e^{-i \sum\limits_{j=1}^{n} t_{j} p (u_{j})} |\psi (u_0) \rangle 
\langle \psi (u_{0}) |  }\\
&=& \exp \left \{2 i \sum_{j=1}^{n} t_j \Im \langle u_0 | u_j \rangle - 
\frac{1}{2} \sum_{i,j} t_i t_j \langle u_i | u_j \rangle \right \}.
\end{eqnarray*}
In other words $\{p(u) | u \in \mathcal{H}_{\mathbb{R}}\}$ is a classical 
Gaussian random field in the state $|\psi (u_0) \rangle \langle \psi (u_0)|$ 
with mean functional $m(\cdot)$ and covariance kernel $K(.,.)$ given by
$$m(u) = 2 \Im \langle u_0 | u \rangle, K(u,v) = \langle u|v \rangle$$
for all $u, v \in \mathcal{H}_{\mathbb{R}}.$ By the same arguments $\{q (u) | 
u \in \mathcal{H}_{\mathbb{R}}\}$ is again a classical Gaussian random field 
with the same covariance kernel but its mean functional $m^{\prime}$ is given 
by 
$m^{\prime} (u) = 2 \Re \langle u_0|u\rangle.$ Note that $[q(u), p(v)]=2i 
\langle u | v \rangle$ for all $u, v \in \mathcal{H}_{\mathbb{R}}$ and 
therefore $q(u)$ and $p(v)$ need not commute with each other.

Let us now examine the case of the third family. We have for any fixed $u \in 
\mathcal{H}$ and any selfadjoint operator $H$ in $\mathcal{H}$
\begin{eqnarray*}
\lefteqn{ \langle \psi (u)| e^{-it \lambda (H) }    | \psi (u) \rangle   } \\
&=& e^{-\|u\|^{2}} \langle e(u) | \Gamma (e^{-itH}) | e (u) \rangle   \\
&=&  e^{-\|u\|^{2}} \langle e(u) | e (e^{-itH}u) \rangle  \\
&=&    \exp \langle u | e^{-itH} - 1 | u \rangle.
\end{eqnarray*}
If $P^H$ is the spectral measure of $H$ so that
$$H = \int_{\mathbb{R}} x P^H (dx)$$
then
\begin{eqnarray*}
\lefteqn{ \langle \psi (u) | e^{-it \lambda(H)} | \psi (u) \rangle       } \\
&=& \exp \int (e^{itx} -1) \langle u | P^H (dx) | u \rangle.
\end{eqnarray*}
In other words the distribution of the observable $\lambda (H)$ in the coherent 
state $|\psi (u) \rangle \langle \psi (u)|$ is the infinitely divisible 
distribution with characteristic function having the L\'evy-Khinchine 
representation
$$\exp \int (e^{itx} -1) \mu_{H,u} (dx) $$
where
$$\mu_{H,u} (E) = \langle u | P^H (E) | u \rangle $$
for any Borel set $E \subset \mathbb{R}.$ Thus $\{\lambda (H) | H \in 
\mathcal{A}_0\}$ in the coherent state $| \psi (u) \rangle \langle \psi (u) |$ 
realizes a classical random field over the vector space $\mathcal{A}_0 \subset 
\mathcal{A}_{\mathbb{R}}$ for which $\lambda (H)$ has the infinitely divisible 
distribution with L\'evy measure $\mu_{H, u}$ described above.

We observe that $\{p(u), u \in \mathcal{H}\}$ is a real linear but 
noncommutative family of observables such that, in a fixed coherent state, each 
$p(u)$ has a Gaussian distribution. It is natural to call the pair $\{ \{ p(u), 
u \in \mathcal{H}\}, |\psi(u_0) \rangle \langle \psi(u_0)| \}$ a {\it quantum 
Gaussian field}. Similarly, $\{\lambda (H) | H \in \mathcal{A}_{\mathbb{R}}\}$ 
where $\mathcal{A}_{\mathbb{R}}$ is the real linear space of all bounded 
selfadjoint operators in $\mathcal{H},$ is a real linear but noncommutative 
space of observables such that in the state $|\psi (u)\rangle \langle 
\psi(u)|,$ $\lambda (H)$ has the infinitely divisible distribution with L\'evy 
measure $\langle u | P^H (\cdot)|u \rangle$ for every $H.$ Thus it is natural 
to call the pair $\{\{\lambda (H) | H \in \mathcal{A}_{\mathcal{R}} \}, 
|\psi(u) \rangle \langle \psi (u)|  \}$ a {\it quantum L\'evy field.} It looks 
like an interesting problem to examine the nature of the most general quantum 
Gaussian and L\'evy fields.

Our discussions in this section also show that the observable fields $\{p(u), 
u \in \mathcal{H} \}$ and $\{\lambda (H), H \in \mathcal{A}_{\mathbb{R}}\}$ 
constitute a natural ground for constructing a theory of stochastic integration 
and paving the way for introducing a quantum stochastic calculus.

\section{$G$-stationary quantum Gaussian processes}
\label{sec:4}
\setcounter{equation}{0}

Let $G$ be a second countable metric group acting continuously on a second 
countable metric space $A$ and let $K(\alpha, \beta),$ $\alpha, \beta \in A$ be 
a continuous positive definite $G$-invariant kernel so that
\begin{equation}
 K (g \alpha, g \beta) = K(\alpha, \beta) \quad \forall \,\alpha, \beta \in A, 
g \in G.  \label{eq4.1}
\end{equation}
Then by the GNS principle there exists a Gelfand triple $(\mathcal{H}, \lambda, 
\pi)$ consisting of a Hilbert space $\mathcal{H},$ a map $\lambda:A \rightarrow 
\mathcal{H}$ and a strongly continuous unitary representation $g \rightarrow 
\pi(g)$ of $G$   such that the following 
properties hold:
\begin{itemize}
 \item[(i)]\qquad $\lambda$ is continuous and $\mathcal{H}$ is the closed 
linear span of $\{\lambda (\alpha), \alpha \in A\};$
\item[(ii)] \qquad $K (\alpha, \beta) = \langle \lambda (\alpha) | \lambda 
(\beta)\rangle \quad \forall\,\, \alpha, \beta \in A;$
\item[(iii)] \qquad $\pi (g) \lambda (\alpha) = \lambda (g \alpha) \quad \forall 
\,\, g \in G, \alpha \in A;$
 \item[(iv)]\qquad  If there is another triple $(\mathcal{H}^{\prime}, 
\lambda^{\prime}, \pi^{\prime})$ satisfying (i), (ii) and (iii) then there 
exists a Hilbert space isomorphism $U:\mathcal{H} \rightarrow 
\mathcal{H}^{\prime}$ such that $U \lambda (\alpha) = \lambda^{\prime} (\alpha) 
\forall \alpha$ and $U \pi (g) U^{-1} = \pi^{\prime} (g) \forall \,\,g \in G.$ 
\\
For a proof we refer to \cite{17}.
\end{itemize}

Now consider the boson Fock space $\Gamma (\mathcal{H})$ where $(\mathcal{H}, 
\lambda, \pi)$ is a Gelfand triple described above and associated  with 
$K(.,.).$ Define the observables
\begin{eqnarray*}
 q(\alpha) &=& \frac{1}{\sqrt{2}} (a (\lambda (\alpha)) + a^{\dagger} 
(\lambda(\alpha))), \\
p(\alpha) &=& \frac{1}{i \sqrt{2}} (a (\lambda (\alpha))-a^{\dagger} 
(\lambda (\alpha))
\end{eqnarray*}
where $a(u)$ and $a^{\dagger} (u)$ are the annihilation and creation operators 
associated with $u \in \mathcal{H}.$ Then

\begin{eqnarray*}
\left [q (\alpha), q (\beta) \right ] &=& \left [p(\alpha), p(\beta) \right ]=i 
\,\Im \,K(\alpha,\beta) \\
\left [q(\alpha), p(\beta)\right ] &=& i \,\Re\, K(\alpha, \beta)
\end{eqnarray*}
for all $\alpha, \beta \in A.$ We shall now examine the probability 
distribution of an arbitrary finite real linear combination of the form
$$Z = \sum_j \left ( x_j q (\alpha_j) + y_j p(\alpha_j) \right ) $$
where $\alpha_j \in A,$ and $x_j, y_j$ are real scalars. If we write $z_j = x_j 
+ i y_j$ then
$$itZ = a^{\dagger} \left (it\,\frac{\sum z_j \lambda (\alpha_j)}{\sqrt{2}} 
\right ) - a \left ( \frac{it \, \sum z_j \lambda (\alpha_j)}{\sqrt{2}}\right 
)$$
and
$$e^{itZ} = W \left ( \frac{it}{\sqrt{2}} \sum_j z_j \alpha_j \right )$$
is the unitary Weyl operator for any $t \in \mathbb{R}.$ Thus
$$\langle e(0) | e^{itZ} | e(0) \rangle = e^{-\frac{t^2}{4} \sum\limits_{j,k} 
\bar{z}_j z_k 
K (\alpha_{j}, \alpha_{k})}.$$
This shows that, in the vacuum state, the observable $Z$ has the normal 
distribution $N (0, \frac{1}{2} \sum_{j,k} \bar{z}_j z_k K (\alpha_j, 
\alpha_k)).$

Furthermore
$$\Gamma (\pi (g)) \left [\begin{array}{c} q(\alpha) \\ p(\alpha) \end{array} 
\right ] \Gamma (\pi(g))^{-1} = \left [\begin{array}{c} q(g\alpha) \\ 
p(g \alpha) \end{array} 
\right ], $$
where $\Gamma (\pi(g))$ is the second quantization of $\pi(g).$ Thus the 
observable $Z$ and $\Gamma (\pi(g)) Z \,\Gamma (\pi (g))^{-1}$ have the same 
normal distribution.

This shows that the family of observables $\{ q(\alpha), p(\alpha), \alpha \in 
A \}$ constitute a $G$-invarinat quantum Gaussian process in the vacuum state.

If $K$ is a real positive definite kernel then each of the families $\{q 
(\alpha), \alpha \in A\}$ and $\{p(\alpha), \alpha \in A\}$ is commutative and 
therefore each of them executes a classical $G$-stationary Gaussian process.

\section{Quantum stochastic calculus and a noisy Schr\"odinger equation}
\label{sec:5}
\setcounter{equation}{0}

Consider a quantum system $S$ whose states are described by density operators 
in a Hilbert space $\mathcal{H}_S.$ Suppose that this system is coupled to a 
bath or an extermal environment whose states are described by density operators 
in a boson Fock space of the form
\begin{equation}
\mathcal{H} = \Gamma (L^2 (0, \infty) \otimes \mathbb{C}^d)   \label{eq5.1}
\end{equation}
where $e_i = (0, \ldots, 0, 1,0\ldots,0)$ with $1$ in the $i$-th place, 
$i=1,2,\ldots, d$ is fixed as a canonical orthonormal basis in $\mathbb{C}^d.$ 
We write
\begin{eqnarray}
 \widetilde{\mathcal{H}} &=& \mathcal{H}_S \otimes \mathcal{H} \label{eq5.2}\\
 \widetilde{\mathcal{H}}_t &=& \mathcal{H}_S \otimes \Gamma (L^2 [0,t] \otimes 
\mathbb{C}^d) \label{eq5.3} \\
 \widetilde{\mathcal{H}}^t &=& \Gamma (L^2 ([t, \infty] \otimes \mathbb{C}^d) 
\label{eq5.4} \\
 \widetilde{\mathcal{H}}^{(s,t)} &=& \Gamma (L^2 [s,t] \otimes \mathbb{C}^d), 0 
\leq s \leq t < \infty \label{eq5.5}
\end{eqnarray}
Then for any $0 < t_1 < t_2 < \cdots < t_n < \infty,$
$$\mathcal{H}^{(0,\infty)} = \mathcal{H}^{(0,t_{1})} \otimes 
\mathcal{H}^{(t_{1}, t_{2})} \otimes \cdots \otimes \mathcal{H}^{(t_{n-1}, 
t_{n})} \otimes \mathcal{H}^{t_{n}}$$
where the identification of both sides can be achieved through products of 
exponential vectors from different components. The noise accumulated from the 
bath during the time period $(s,t)$ admits a description through observables in 
the Hilbert space $\mathcal{H}^{(s,t)}.$ This also suggests that noise can be 
described by observables in a general continuous tensor product Hilbert space 
but the boson Fock space $\mathcal{H}$ is the simplest such model in which 
$\mathbb{C}^d$ means that there are $d$ degrees of freedom in the selection of 
noise.

We introduce the following family of noise processes:
\begin{eqnarray}
\Lambda_0^i (t) &=& I_S \otimes a \left (1_{[0,t]}  \otimes e_i \right ) 
\label{eq5.6} \\
\Lambda^0_i (t) &=& I_S \otimes a^{\dagger} \left (1_{[0,t]}  \otimes e_i 
\right 
)  \label{eq5.7} \\
\Lambda_j^i (t) &=& I_S \otimes  \lambda (I_{[0,t]} \otimes |e_j \rangle 
\langle e_i|),  \label{eq5.8} \\
\Lambda_0^0 (t) &=& t \,\,I_{\widetilde{\mathcal{H}}} \label{eq5.9}
\end{eqnarray}
where the indices $i,j$ vary in $\{1,2, \ldots, d\},$ $1_{[0,t]}$ is the 
indicator function of $[0,t]$ as an element of $L^2 ([0,\infty]),$ $1_{[0.t]}$ 
is the operator of multiplication by $1_{[0,t]}$ in $L^2 ([0, \infty))$ and 
$I_S,$ $I_{\widetilde{\mathcal{H}}}$ denote respectively the identity operators 
in $\mathcal{H}_S,$ $\widetilde{\mathcal{H}}.$ We shall use Greek letters 
$\alpha, \beta, \ldots$ to indicate indices in $\{0,1,2,\ldots,d\}.$ All the 
operators $\Lambda_{\beta}^{\alpha}(t)$ in \eqref{eq5.6}-\eqref{eq5.9} are 
well-defined on the linear manifold generated by elements of the form $\psi 
\otimes e (f),$ $\psi \in \mathcal{H}_S$ and $f \in L^2 (\mathbb{R}_{+}) 
\otimes \mathbb{C}^d,$  a Hilbert space which can be viewed as the space of 
$\mathbb{C}^d$-valued functions on $\mathbb{R}_{+}$ which are norm square 
integrable. We interpret $\{\Lambda_0^0 (t)\}$ as {\it time process}, 
$\{\Lambda_0^i (t) \}$ and $\{\Lambda^0_i (t) \}$ as {\it annihilation} and 
{\it creation processes} of {\it type} (or {\it colour}) $i$ and       
$\{\Lambda_j^i (t)\}$ as a {\it conservation process} 
which {\it changes the type $i$ to type $j.$} It is interesting to note that, 
for $0<s<t<\infty,$ $\Lambda_{\beta}^{\alpha} (t) - 
\Lambda_{\beta}^{\alpha} (s)$ 
as an operator is active only in the sector $\mathcal{H}^{(s,t)}$ of the 
continuous tensor product $\mathcal{H}$ defined in \eqref{eq5.1}-\eqref{eq5.5}. 
The processes $\{\Lambda_{\beta}^{\alpha} (t)\}$ will be the fundamental 
processes with respect to which stochastic integrals can be defined.

A family $X = \left \{ X(t), 0 \leq t < \infty \right \}$ of operators in 
$\widetilde{\mathcal{H}}$ is said to be {\it adapted} if, for each $t,$ there 
exists an operator $X_t$ in $\widetilde{\mathcal{H}}_t$ such that $X(t) = X_t 
\otimes I^t$ where $I^t$ is the identity operator in $\mathcal{H}^t.$ An 
adapted process $X$ is said to be {\it simple} if there exists a partition $0 < 
t_1 < t_2 < \cdots < t_n < \cdots < \infty$ of $[0, \infty)$ so that $t_n 
\rightarrow \infty$ as $n \rightarrow \infty$ and
\begin{equation}
X(t) = \sum_{j=0}^{\infty} X (t_j) 1_{[t_{j}, t_{j+1})} (t) \quad \forall \,\, 
t   \label{eq5.10}
\end{equation}
where $t_0 =0.$ If $X_1, X_2, \ldots, X_k$ are simple adapted processes it is 
clear that by taking the intersection of their respective partitions one can 
express $X_j$ in the form \eqref{eq5.10} in terms of a partition independent of 
$j.$

Now suppose $\{E_{\beta}^{\alpha}\}$ is a family of simple adapted processes 
with respect to a single partition $0 < t_1 < t_2 < \cdots .$ Then define the 
(quantum) stochastic integral
\begin{equation}
\int_0^t E_{\beta}^{\alpha} (s) d \Lambda_{\alpha}^{\beta} (s) = \sum_j 
E_{\beta}^{\alpha} (t_j) \left \{ \Lambda_{\beta}^{\alpha} (t_{j+1} \wedge 
t) - 
\Lambda_{\beta}^{\alpha} (t_j \wedge t) \right \}   \label{eq5.11}
\end{equation}
where we adopt the Einstein convention that repeated Greek index means 
summation with 
respect to that index. Here $a \wedge b$ denotes the minimum of $a$ and $b.$ 

The stochastic integral \eqref{eq5.11} as a function of $t$ is again an adapted 
process. Quantum stochastic calculus tells us that the definition 
\eqref{eq5.11} 
can be completed to extend the notion of stochastic integral to a rich class of 
adapted processes. For details we refer to the paper by R. L. Hudson and K. R. 
Parthasarathy (1984) and also the book \cite{14}. With this completed 
definition 
we consider adapted processes of the form
\begin{equation}
X(t) = X(0) + \int_0^t E_{\beta}^{\alpha} (s) d \Lambda_{\alpha}^{\beta} (s)    
\label{eq5.12}
\end{equation}
where $X(0) = X_0 \otimes I$ with $X_0$ an operator in $\mathcal{H}_S$ and $I,$ 
the identity operator in $\mathcal{H}.$ We express \eqref{eq5.12} as
\begin{equation}
d X (t) =  E_{\beta}^{\alpha} (t) d \Lambda^{\beta}_{\alpha} (t)  \label{eq5.13}
\end{equation}
with initial value $X_0.$ It is interesting to note that 
$E_{\beta}^{\alpha}(t)$ is active in $\widetilde{\mathcal{H}}_t$ whereas $d 
\Lambda_{\alpha}^{\beta} (t)$ is active in $\mathcal{H}^{(t, t + dt)}$ and 
therefore $E_{\beta}^{\alpha}(t)$ and $d \Lambda_{\alpha}^{\beta}(t)$ commute 
with each other.

Now suppose $Y(t)$ is another adapted process with initial value $Y_0$ and
\begin{equation}
 d Y(t) = F_{\beta}^{\alpha} (t) d \Lambda_{\alpha}^{\beta} (t). \label{eq5.14}
\end{equation}
Then it is natural to ask, in the light of our initial discussion in Section 
1, what is the differential of $X(t)Y(t).$ The answer to this question was 
provided by Hudson and Parthasarathy \cite{9}: In fact 
\begin{equation}
d (X(t)Y(t)) = X (t) dY(t) + (dX(t)) Y(t) + d X(t) d Y (t)   \label{eq5.15}
\end{equation}
where, using the Einstein convention, we have
\begin{eqnarray}
 X(t) d Y(t) &=& X(t) F_{\beta}^{\alpha} (t) d \Lambda_{\alpha}^{\beta} (t), 
\label{eq5.16}\\
 (dX(t)) Y(t) &=& E_{\beta}^{\alpha} (t) Y(t) d \Lambda_{\alpha}^{\beta} (t), 
\label{eq5.17}\\
 dX(t) dY(t) &=& E_{\beta}^{\alpha} (t) F_{\varepsilon}^{\gamma} (t) d 
\Lambda_{\alpha}^{\beta} (t).d \Lambda_{\gamma}^{\varepsilon} (t) 
\label{eq5.18} 
\\
 d \Lambda_{\alpha}^{\beta} (t) d \Lambda_{\gamma}^{\varepsilon}(t) &=& 
\widehat{\delta}_{\gamma}^{\beta} d \Lambda_{\alpha}^{\varepsilon} (t), 
\label{eq5.19}\\
 \widehat{\delta}_{\gamma}^{\beta} &=& \left \{\begin{array}{lcl} 
\delta_{\gamma}^{\beta} & \mbox{if}& \beta \neq 0, \gamma \neq 0 \\ 0 & 
\mbox{otherwise} & \end{array}  \right . \label{eq5.20}
\end{eqnarray}
Equations \eqref{eq5.15}-\eqref{eq5.20} completely describe how the classical 
Leibnitz formula gets corrected by the product of the differentials of $X(t)$ 
and $Y(t)$ in terms of the products of the differentials of the fundamental 
processes $\{\Lambda_{\beta}^{\alpha}, \alpha, \beta \in \{0,1,\ldots, d\}.$ 
This is the `{\it quantum Ito's formula}'. Equations \eqref{eq5.19} and 
\eqref{eq5.20} constitute the boson Fock quantum stochastic calculus version of 
the classical Ito's tables \eqref{eq1.11} and \eqref{eq1.13} of Section 1 for 
Brownian motion 
and Poisson process. If we write
\begin{equation}
Q_i (t) = (\Lambda_0^i (t) + \Lambda_i^0 (t))^{\sim}, 1 \leq i \leq d,   
\label{eq5.21}
\end{equation}
$\sim$ denoting closure over $\mathcal{E}$ then $(Q_1, Q_2, \ldots, Q_d)$ is a 
collection of commuting observables and they execute the $d$-dimensional 
Brownian motion process in the vacuum state  $|\psi(0) \rangle \langle \psi 
(0)|$ and equations \eqref{eq5.19} and \eqref{eq5.20} imply that $d Q_i (t) d 
Q_j (t) = \delta_{ij} dt,$ which is the classical Ito correction formula for 
Brownian motion.

If we define, for $\lambda_i > 0,$ $1 \leq i \leq d,$
\begin{equation}
N_i (t) = \sqrt{\lambda_i} Q_i (t) + \Lambda_i^i (t) + \lambda_i t   
\label{eq5.22}
\end{equation}
then $\{ N_i (t) \}$ is a commutative process which executes a classical 
Poisson process with intensity $\lambda_i$ in the Fock vacuum state and 
\begin{equation}
d N_i (t) d N_j (t) = \delta_{ij} d N_j (t)   \label{eq5.23}
\end{equation}
for any $i,j.$ This is Ito's formula for the classical Poisson process.

Suppose $L_{\beta}^{\alpha},$ $\alpha, \beta \in \{0,1, \ldots, d\}$ are 
bounded operators in the system Hilbert space $\mathcal{H}_S$ and
$$L_{\beta}^{\alpha} (t) = L_{\beta}^{\alpha} \otimes I_{\mathcal{H}} \quad 
\forall \,\, t \ge 0 $$
where $I_{\mathcal{H}}$ is the identity operator in the bath Hilbert space 
$\mathcal{H}.$ Thus we obtain a family of constant adapted processes. We now 
write a quantum stochastic differential equation of the exponential type:
\begin{equation}
d U(t) = \{   L_{\beta}^{\alpha} (t) d \Lambda_{\alpha}^{\beta}(t) \} U(t), t 
\ge 
0 \label{eq5.24}
\end{equation}
with the initial condition $U(0) = I.$ Then we have the following theorem:

\begin{theorem}[Hudson and Parthasarathy \cite{9}]\label{thm1}
Equation \eqref{eq5.24} with the initial condition $U(0)=I$ has a unique 
unitary operator-valued adapted process as a solution if and only if
\begin{equation}
 L_{\beta}^{\alpha} + (L^{\beta}_{\alpha})^{\dagger} 
+\sum_i(L^{i}_{\alpha})^{\dagger} 
L_{\beta}^{i} = L_{\beta}^{\alpha} + (L^{\beta}_{\alpha})^{\dagger} + \sum_i
L_{i}^{\alpha} (L^{\beta}_{i})^{\dagger} = 0 \label{eq5.25}
\end{equation}
for each $\alpha, \beta$ in $\{0, 1, \ldots, d\}.$
\end{theorem}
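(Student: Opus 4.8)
The plan is to prove the theorem in two stages: first establish that \eqref{eq5.24} has a unique adapted solution $U=\{U(t)\}$, and then characterise the unitarity of that solution by applying the quantum Ito formula \eqref{eq5.15}--\eqref{eq5.20} to the processes $U^\dagger U$ and $UU^\dagger$.

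For existence and uniqueness I would rewrite \eqref{eq5.24} as the integral equation $U(t)=I+\int_0^t \{L_\beta^\alpha\, d\Lambda_\alpha^\beta(s)\}\,U(s)$ and solve it by Picard iteration, setting $U_0(t)=I$ and $U_{n+1}(t)=I+\int_0^t \{L_\beta^\alpha\, d\Lambda_\alpha^\beta(s)\}\,U_n(s)$. Evaluating on vectors $\psi\otimes e(f)$ with $\psi\in\mathcal{H}_S$ and $f\in L^2(\mathbb{R}_+)\otimes\mathbb{C}^d$, the fundamental estimate of quantum stochastic integration bounds the norm of each successive increment by the $L^2$-norm of its integrand over $[0,t]$; since the $L_\beta^\alpha$ are constant and bounded this iterates to a factorially decaying bound of the form $\|(U_{n+1}(t)-U_n(t))(\psi\otimes e(f))\|^2\le C_f\,(Ct)^n/n!$, whose sum over $n$ converges. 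This yields a unique adapted operator process $U(t)$ defined on the exponential domain and solving \eqref{eq5.24}; boundedness of the coefficients is exactly what makes the estimate uniform and the series summable.

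For unitarity I would use the adjoint rule $(d\Lambda_\alpha^\beta)^\dagger=d\Lambda_\beta^\alpha$ to record that $U^\dagger$ satisfies $dU^\dagger=U^\dagger\{(L_\alpha^\beta)^\dagger\,d\Lambda_\alpha^\beta\}$, and note that each $d\Lambda_\alpha^\beta(t)$, being active only in $\mathcal{H}^{(t,t+dt)}$, commutes with $U(t)$ and $U(t)^\dagger$. Substituting into \eqref{eq5.15}--\eqref{eq5.20} and collecting the coefficient of each $d\Lambda_\alpha^\beta$---the Ito correction \eqref{eq5.19}--\eqref{eq5.20} contributing the sum over the nonzero contracted index $i$---gives
\begin{equation*}
d(U^\dagger U)=U^\dagger\left(L_\beta^\alpha+(L_\alpha^\beta)^\dagger+\sum_i (L_\alpha^i)^\dagger L_\beta^i\right)U\, d\Lambda_\alpha^\beta,
\end{equation*}
so the first family of equations in \eqref{eq5.25} forces $d(U^\dagger U)=0$, whence $U^\dagger U\equiv U(0)^\dagger U(0)=I$. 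The parallel computation for $W=UU^\dagger$ gives
\begin{equation*}
d(UU^\dagger)=\left(L_\beta^\alpha W+W(L_\alpha^\beta)^\dagger+\sum_i L_i^\alpha W (L_i^\beta)^\dagger\right) d\Lambda_\alpha^\beta,
\end{equation*}
for which the second family in \eqref{eq5.25} makes the constant process $W\equiv I$ a solution; by the uniqueness from the first stage, $UU^\dagger\equiv I$, and together these give unitarity. Conversely, if $U(t)$ is unitary then $U^\dagger U\equiv I$ makes the first displayed differential vanish; pairing it between exponential vectors $e(f),e(g)$ and invoking the totality and linear independence of exponential vectors (properties (i) and (ii) of Section 2) together with the invertibility of $U$, one can strip off the outer $U^\dagger(\cdots)U$ and conclude that the bracketed coefficient vanishes for every $\alpha,\beta$, which is the first condition; the coisometry relation yields the second condition in the same way.

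I expect the main obstacle to be the first stage: making the Picard estimate rigorous demands the fundamental inequalities of quantum stochastic integration on the exponential domain and careful control of the unbounded creation and conservation differentials. A secondary subtlety is that the coisometry half of the unitarity argument is not a direct $d(\cdot)=0$ computation but requires applying the uniqueness theorem to the auxiliary equation for $UU^\dagger$. Once the Ito table \eqref{eq5.19}--\eqref{eq5.20} and the adjoint rule for $d\Lambda$ are in hand, the extraction of the algebraic conditions \eqref{eq5.25} is essentially bookkeeping in the Einstein-summation convention.
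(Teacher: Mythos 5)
The paper itself gives no proof of this theorem --- it is quoted from Hudson and Parthasarathy \cite{9} --- but your proposal reproduces the argument of that reference essentially verbatim: Picard iteration on vectors $\psi\otimes e(f)$ for existence and uniqueness of an adapted solution, then the quantum Ito formula \eqref{eq5.15}--\eqref{eq5.20} applied to $U^{\dagger}U$ and $UU^{\dagger}$ (the latter correctly flagged as needing uniqueness for the auxiliary equation in which $UU^{\dagger}$ sits between the coefficients) to show that \eqref{eq5.25} is equivalent to isometry plus coisometry. Your outline is correct, including the bookkeeping that makes the Ito correction produce the sums over the nonzero contracted index $i$; the only point stated more loosely than in \cite{9} is the converse direction, where the extraction of each coefficient from the vanishing of the stochastic integral is done via the first fundamental formula of quantum stochastic calculus rather than linear independence of exponential vectors alone.
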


\begin{remark}\label{rem1}
Suppose $L_{\beta}^{\alpha} = 0$ whenever $\alpha \neq 0$ or $\beta \neq 0.$ 
Then condition \eqref{eq5.25} becomes $L_0^0 + L_0^{0 \dagger} = 0,$ i.e., 
$L_0^0= - iH$ where $H$ is a bounded selfadjoint operator and equation 
\eqref{eq5.24} takes the form
\begin{equation}
 dU(t) = -iHU(t)  \label{eq5.26}
\end{equation}
which is the familiar Schr\"odinger equation with energy operator $H.$ In view 
of this property we say that \eqref{eq5.24} with condition \eqref{eq5.25} is a 
{\it noisy Schr\"odinger equation.}
\end{remark}

Note that \eqref{eq5.26} has a unique unitary solution with the initial 
condition $U(0) = I$ even when $H$ is any unbounded selfadjoint operator. Thus 
it is natural to examine \eqref{eq5.24} for a unique unitary solution even when 
$L_{\beta}^{\alpha}$ are unbounded operators and \eqref{eq5.25} holds on a 
dense domain or in the form sense. There is some recent progress in this 
direction in the paper \cite{4}.

A case of special interest arises when $L_j^i = 0$ for all $i, j$ in $\{ 1,2, 
\ldots, d\}.$ Writing $L_0^i = L_i$ and
$$\Lambda_0^i = A_i, \quad \Lambda_i^0 = A_i^{\dagger}$$
and using \eqref{eq5.25}, equation \eqref{eq5.24} takes the form
\begin{equation}
dU(t) = \left \{\sum_{j=1}^{n} (L_j d A_j^{\dagger} - L_j^{\dagger} d A_j)-(i H 
+ \frac{1}{2} \sum_{j=1}^{n} L_j^{\dagger} L_j )  dt \right \} U(t)     
\label{eq5.27}
\end{equation}
where $A_j$ and $A_j^{\dagger}$ are the annihilation and creation processes of 
type $j.$ Here the operators $L_j$ and the selfadjoint operator $H$ may all be 
unbounded.

A further specialization to the case when the initial Hilbert space 
$\mathcal{H}_S$ is $\Gamma (\mathbb{C}^n),$ each $L_j$ is of the form $a (u_j) 
+ a^{\dagger} (v_j)$ with $u_j,$ $v_j$ in $\mathbb{C}^n$ and $H$ can be 
expressed as a second degree polynomial of creation and annihilation operators 
in $\Gamma (\mathbb{C}^n)$ is of great interest. F. Fagnola \cite{4} has shown 
the existence of 
unique unitary solutions in the case $n=1$ and has communicated to me that his 
proofs in \cite{4} go through for every finite $n.$ In the next section we 
shall 
show how this can be used to construct quantum Gaussian Markov processes of the 
quasifree type.

We conclude this section with a note on the {\it Heisenberg equation in the 
presence of noise.} Consider the unitary solution $\{U(t)\}$ of \eqref{eq5.24} 
under conditions \eqref{eq5.25}. For any bounded operator $X$ in the system 
Hilbert space $\mathcal{H}_S$ define
\begin{equation}
 j_t (X) = U(t)^{\dagger} (X \otimes I)  U(t), \quad t \ge 0   \label{eq5.28}
\end{equation}
where $I$ is the identity operator in the boson Fock space $\Gamma (L^2 \otimes 
\mathbb{C}^d).$ From conditions \eqref{eq5.25} it follows that there exist 
bounded operators $L_i,$ $1 \leq i \leq d,$ $H$ and $S_j^i,$ $i,j \in \{1,2, 
\ldots, d\}$ such that
\begin{itemize}
 \item[] (i)\quad  $H$ is selfadjoint
 \item[] (ii) \quad the matrix operator $((S_j^i)),$ $i, j \in (1 \leq i \leq 
d)$ in $\mathcal{H}_S \otimes \mathbb{C}^d$ is unitary and
\begin{equation}
  L_j^i = \left \{\begin{array}{lcl} S_j^i - \delta_j^i &\mbox{if} & i, j \in 
\{1,2, \ldots, d\},\\ L_i &\mbox{if}& 1 \leq i \leq d, j=0 \\ - \sum\limits_k 
L_k^{\dagger} S_j^k &\mbox{if}& 1 \leq j \leq d, i = 0, \\ - \{i H + \frac{1}{2}
\sum_{k=1}^{d} L_k^{\dagger} L_k \} &\mbox{if} & i = 0, j=0. \end{array}  
\right .  \label{eq5.29}
\end{equation}      
\end{itemize}
Then it follows  from quantum Ito's formula that
\begin{equation}
dj_t(X) = j_t (\theta_{\beta}^{\alpha} (X)) d \Lambda_{\alpha}^{\beta}    
\label{eq5.30}
\end{equation}
where
\begin{equation}
\theta_{\beta}^{\alpha} (X) = \left \{\begin{array}{l} \sum_{k=1}^{d} 
(S_j^k)^{\dagger} X S_j^k - \delta_j^i X \quad \mbox{if} \quad \alpha=i, 
\beta=j, \\  \sum_{k=1}^{d} (S_i^k)^{\dagger} [X, L_k] \quad \mbox{if} \quad  
\alpha = i, \beta = 0, \\ \sum_{k=1}^{d}    [L_k^{\dagger}, X] S_j^k \quad 
\mbox{if}\quad \alpha = 0, \beta =j, \\ i[H,X] - \frac{1}{2}  \sum_{k=1}^{d}  
(L_k^{\dagger} L_k X + X L_k^{\dagger} L_k - 2 L_k^{\dagger} X L_k) \quad 
\mbox{if}\quad \alpha = \beta = 0 \end{array}   \right .    \label{eq5.31}
\end{equation}
Equations \eqref{eq5.26}, \eqref{eq5.28} and \eqref{eq5.29} describe the 
dynamics of the system observable $X$ in the presence of quantum noise 
generated by the fundamental processes $\Lambda_{\beta}^{\alpha}(t)$ which 
include
time. Suppose $S_j^i=\delta_j^i$ and  $L_{k}=0 \,\forall \,k.$ Then 
\eqref{eq5.30} 
reduces to
$$d j_t (X) = j_t (i [H,X])$$
which is the classical Heisenberg equation with energy operator $H.$ Thus 
\eqref{eq5.28} and \eqref{eq5.29} together deserve to be called a {\it 
Heisenberg equation in the presence of noise} or {\it a Heisenberg-Langevin 
equation.}

Suppose we take the Fock vacuum conditional expectation of $j_t(X)$ and write
$$\langle f |T_t (X) | g \rangle = \langle f \otimes \psi (0) | j_t(X) | g 
\otimes \psi (0)   \rangle. $$
Then $T_t (X)$ is an operator in $\mathcal{H}_S$ and the fact that all the true 
noise operators $\Lambda_{\beta}^{\alpha}(s)$ with $\alpha \neq 0$ or $\beta 
\neq 0$ annihilate the vacuum vector $|\psi (0)\rangle,$ it follows that
$$d T_t (X) = T_t (\theta_0^0 (X)) dt$$
where $\theta_0^0$ is given by the last equation in \eqref{eq5.31}, namely
$$\theta_0^0 (X) = i [H,X] - \frac{1}{2} \sum_{k=1}^{d} (L_k^{\dagger} L_k X + 
XL_k^{\dagger} L_k - 2 L_k^{\dagger} XL_k). $$
In other words $\theta_0^0$ is the generator of the semigroup of completely 
positive maps $\{T_t, t \ge 0 \}.$ The expression $\theta_0^0,$ rather 
remarkably, coincides with the well-known generator in the form obtained by 
Gorini, Kassakowski, Sudarshan \cite{6} and Lindblad \cite{13} in 1976. This 
shows a way to construct stationary quantum Markov processes mediated by 
quantum dynamical semigroups of completely positive maps.

\section{Quantum Gaussian Markov processes and stochastic differential 
equations}
\label{sec:6}
\setcounter{equation}{0}

Consider the boson Fock space $\Gamma (\mathbb{C}^n)$ where $\mathbb{C}^n$ is 
equipped with the canonical orthonormal basis $e_j = (0,0,\ldots, 0,1,0,\ldots, 
0)^T,$ $1$ being the $j$-th coordinate, $j=1,2,\ldots,n.$ Define

\begin{eqnarray*}
 a_j &=& a(e_j), \quad a_j^{\dagger} = a (e_j)^{\dagger} \\
 q_j &=& \frac{a_j + a_j^{\dagger}}{\sqrt{2}}, \quad  p_j = \frac{a_j - 
a_j^{\dagger}}{i\sqrt{2}}
\end{eqnarray*}
Then $p_1, \ldots, p_n$ and $q_1, \ldots, q_n$ satisfy the canonical Heisenberg 
commutation relations and therefore can be viewed as momentum and position 
observables of a quantum system with $n$ degrees of freedom. A state $\rho$ in 
$\Gamma (\mathbb{C}^n)$ is called {\it Gaussian} if its quantum Fourier 
transform 
$\widehat{\rho}(\bm{u}) = \Tr \, \rho W (\bm{u})$ has the form
\begin{equation}
\widehat{\rho}(\bm{u}) = \exp \left \{- i \sqrt{2} \left (\bm{\ell}^T 
\bm{x} 
- \bm{m}^T \bm{y} \right ) - \left ( \bm{x}^T, \bm{y}^T \right ) S {\bm{x} 
\choose \bm{y}}  \right \} \quad \forall \,\,\bm{u} \in \mathbb{C}^n   
\label{eq6.1}
\end{equation}
where $\bm{x} = \Re \bm{u},$ $\bm{y} = \Im \bm{u}$ and $S$ is a real $2n \times 
2n$ matrix satisfying the matrix inequality
\begin{equation}
 2S +i \left [\begin{array}{cc} 0 & -{\rm I_n} \\ {\rm I_n} & 0   \end{array} 
\right ] \ge 
0,  \label{eq6.2}
\end{equation}
$I_n$ being the identity matrix of order $n.$ In such a case one has $\bm{\ell} 
= \Tr \,\bm{p} \rho,$ $\bm{m} = \Tr \, \bm{q} \rho$ and $S$ is the covariance 
matrix of the observables $\left (X_1, X_2, \ldots, X_{2n} \right ) = (p_1, 
p_2, \ldots, p_n, -q_1, \ldots, -q_n)$ so that the $ij$-the entry of $S$ is 
$\Tr \, \frac{X_i X_j + X_j X_i}{2} \rho - \Tr X_i \rho \,\,\Tr X_j \rho.$ So 
we 
call a state $\rho$ satisfying \eqref{eq6.1}, a Gaussian state with {\it 
momentum mean} vector $\bm{\ell},$ {\it position mean} vector $\bm{m}$ and {\it 
covariance matrix} $S.$ For a detailed account of the properties of such 
Gaussian states with a finite degree of freedom we refer to Holevo \cite{8} and 
Parthasarathy \cite{15}, \cite{16}.

Now we look at some transformations of Gaussian states. Define the 
correspondence
\begin{equation}
 T (W(\bm{z})) = W (R^{-1} AR \bm{z} ) \exp \left \{- \frac{1}{2} (R \bm{z})^T 
B (R \bm{z}) \right \}, \bm{z} \in \mathbb{C}^n  \label{eq6.3}
\end{equation}
where $A$ and $B$ are real $2n \times 2n$ matrices, $B \ge 0$
\begin{equation}
R \bm{z} = \left [ \begin{array}{c} \bm{x} \\ \bm{y} \end{array}\right ],\,\,\, 
\bm{x} = \Re \bm{z}, \bm{y} = \Im \,\, \bm{z}.  \label{eq6.4}
\end{equation}
For any mean zero Gaussian state $\rho$ with covariance matrix $S$ we have from 
\eqref{eq6.1}
$$\Tr \, \rho \, T (W(\bm{z})) = \exp  \left \{- (R \bm{z})^T (A^T S A + 
\frac{1}{2} B) R \bm{z} \right \}.$$
The right hand side will be the quantum Fourier transform of a Gaussian state 
with covariance matrix $A^T SA + \frac{1}{2} B$ if and only if
\begin{equation}
2 \left ( A^T SA + \frac{1}{2} B \right ) + i J_{2n} \ge 0    \label{eq6.5}
\end{equation}
where
\begin{equation}
J_{2n} = \left [\begin{array}{cc} 0 & -I_n \\ I_n & 0 \end{array} \right ]   
\label{eq6.6}
\end{equation}
Note that \eqref{eq6.5} holds for every Gaussian covariance matrix $S$ of order 
$2n$ if
\begin{equation}
B+i (J_{2n} - A^T J_{2n} A) \ge 0.   \label{eq6.7}
\end{equation}
It is known from a result of Vanheuerzweijn \cite{19} and the discussion at 
the beginning of the paper \cite{7} by Heinosaari, Holevo and Wolf that for any 
pair $(A,B)$ of $2n \times 2n$ matrices satisfying \eqref{eq6.7}, the 
correspondence defined by \eqref{eq6.3} extends uniquely to a completely 
positive map $T$ on the algebra of all bounded operators on $\Gamma 
(\mathbb{C}^n).$

Now consider a one parameter family $\{T_t\}$ of such completely positive maps 
determined by a family of pairs $(A_t, B_t)$ obeying \eqref{eq6.7}. Simple 
algebra shows that the semigroup condition $T_t T_s = T_{t+s}$ holds if and 
only if the pairs $(A_t, B_t)$ obey the relations
\begin{eqnarray*}
 A_t A_s &=& A_{t+s} \\
B_s + A_s^T B_t A_s &=& B_{t+s} \quad \forall \,\,s,t \ge 0.
\end{eqnarray*}
If we write $A_t = e^{tK}$ then it turns out that any continuous solution $B_t$ 
has the form
$$B_t = \int_0^t e^{sK^{T}} C e^{sK} ds$$
for some matrix $C.$ It is clear that $B_t$ is positive if and only if $C$ is 
positive. Condition \eqref{eq6.7} for the pairs $(A_t, B_t)$ can be written as
$$\int_0^t e^{sK^{T}} \left ( C - i \left \{ K^T J_{2n} + J_{2n} K \right \} 
\right ) e^{sK} ds \ge 0$$
for every $t \ge 0.$ This is equivalent to the condition
$$C +i \left ( K^T J_{2n} + J_{2n} K \right ) \ge 0,$$
since $C,K$ and $J_{2n}$ are all real. Thus we have the following proposition.

\begin{proposition}\label{prop6.1}(Vanheuerzweijn)
Let $(K,C)$ be a pair of $2n \times 2n$ real matrices satisfying the matrix 
inequality
\begin{equation}
 C + i  \left ( K^T J_{2n} + J_{2n} K \right ) \ge 0 \label{eq6.8}
\end{equation}
where $J_{2n}$ is defined by \eqref{eq6.6}. Then there exists a one parameter 
semigroup $\{T_t\}$ of completely positive maps on the algebra of all operators 
in $\Gamma (\mathbb{C}^n)$ satisfying
\begin{equation}
T_t (W(\bm{z})) = W (R^{-1} e^{tK} R \bm{z}) \exp - \frac{1}{2} \left (R 
(\bm{z})^T \int_0^t e^{sK^{T}} C e^{sK} ds \right ) R\bm{z}  \label{eq6.9}
\end{equation}
for all $\bm{z} \in \mathbb{C}^n.$
\end{proposition}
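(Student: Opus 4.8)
The plan is to exhibit the semigroup explicitly through the candidate pair $A_t = e^{tK}$ and $B_t = \int_0^t e^{sK^T} C e^{sK}\,ds$, for which \eqref{eq6.9} is nothing but \eqref{eq6.3} evaluated at $(A_t, B_t)$. Two things then remain: that each $T_t$ is completely positive, and that $\{T_t\}$ is a semigroup. I would not try to prove complete positivity from scratch; instead I would invoke the cited result of Vanheuverzwijn and of Heinosaari--Holevo--Wolf, which says that \eqref{eq6.3} defines a completely positive map for any real pair $(A,B)$ obeying \eqref{eq6.7}. So the task reduces to checking that $(A_t,B_t)$ satisfies \eqref{eq6.7} for every $t\ge 0$, together with the two algebraic identities that the discussion preceding the proposition has already shown to be equivalent to $T_tT_s=T_{t+s}$.

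The heart of the matter is the reduction of \eqref{eq6.7} to the constant hypothesis \eqref{eq6.8}. Writing $J=J_{2n}$ and $M = K^T J + J K$, I would first record the differential identity $\frac{d}{ds}\bigl(e^{sK^T} J e^{sK}\bigr) = e^{sK^T} M e^{sK}$, whose integral over $[0,t]$ gives $J - A_t^T J A_t = -\int_0^t e^{sK^T} M e^{sK}\,ds$. Substituting this and the definition of $B_t$ into the left side of \eqref{eq6.7} collapses it to the single expression $\int_0^t e^{sK^T}(C - iM)\,e^{sK}\,ds$. Each integrand is the congruence $(e^{sK})^*(C-iM)(e^{sK})$ of the Hermitian matrix $C-iM$ by the real matrix $e^{sK}$ (note $(e^{sK})^* = e^{sK^T}$), so it is positive as soon as $C - iM \ge 0$; and since $C$ and $M$ are real, complex conjugation of test vectors shows $C-iM\ge 0$ is equivalent to $C+iM\ge 0$, which is exactly \eqref{eq6.8}. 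Hence \eqref{eq6.8} forces every integrand, and therefore the whole integral, to be positive, so \eqref{eq6.7} holds for each $(A_t,B_t)$. As a byproduct, testing \eqref{eq6.8} against real vectors annihilates the antisymmetric $M$ and yields $C\ge 0$, hence $B_t\ge 0$, which reconciles the standing requirement $B\ge 0$ in \eqref{eq6.3}.

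The semigroup identities I would treat as routine. From $e^{tK}e^{sK}=e^{(t+s)K}$ one gets $A_tA_s=A_{t+s}$ at once, and the substitution $\sigma=r+s$ in $A_s^T B_t A_s = \int_0^t e^{(r+s)K^T} C e^{(r+s)K}\,dr$ turns it into $\int_s^{s+t} e^{\sigma K^T} C e^{\sigma K}\,d\sigma$, so that $B_s + A_s^T B_t A_s = \int_0^{s+t} e^{\sigma K^T} C e^{\sigma K}\,d\sigma = B_{s+t}$. By the equivalence already established before the proposition, these two relations deliver $T_tT_s=T_{t+s}$, and the proof is complete.

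As for difficulty: within the present framework there is essentially no obstacle, because the genuine analytic content --- that \eqref{eq6.7} is precisely the condition dilating \eqref{eq6.3} to a completely positive map on all of $\mathcal{B}(\Gamma(\mathbb{C}^n))$ --- is imported from the cited works. The only step demanding care is converting the $t$-dependent operator inequality \eqref{eq6.7} into the single constant inequality \eqref{eq6.8}; this rests entirely on the derivative identity for $e^{sK^T} J e^{sK}$ and on the conjugation symmetry $C-iM\ge 0 \Leftrightarrow C+iM\ge 0$. I would therefore present that reduction carefully and leave the remaining verifications as direct computation.
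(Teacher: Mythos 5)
Your proposal is correct and follows essentially the same route as the paper: the paper's own argument is the discussion preceding the proposition, which likewise sets $A_t=e^{tK}$, $B_t=\int_0^t e^{sK^T}Ce^{sK}\,ds$, imports complete positivity under \eqref{eq6.7} from Vanheuverzwijn and Heinosaari--Holevo--Wolf, reduces \eqref{eq6.7} to the constant inequality \eqref{eq6.8} via the identity $J_{2n}-A_t^TJ_{2n}A_t=-\int_0^t e^{sK^T}(K^TJ_{2n}+J_{2n}K)e^{sK}\,ds$ and the realness of $C$, $K$, $J_{2n}$, and checks the same two algebraic relations for the semigroup property. Your write-up merely makes explicit a few steps (the derivative identity and the conjugation symmetry) that the paper leaves implicit.
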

The semigroup $\{T_t\}$ in Proposition \ref{prop6.1} is not strongly 
continuous. However, we can analyse the time derivatives of expressions of the 
form $\langle e (\bm{u}) | T_t (W(\bm{z})) | e(\bm{v})\rangle.$ The analysis of 
such 
derivatives leads to a formal quantum stochastic differential equation of the 
noisy Heisenberg type.

\begin{proposition}\label{prop6.2}
Let $\{T_t\}$ be the semigroup defined by \eqref{eq6.9}. Then 
\begin{eqnarray*}
\lefteqn{\frac{d}{dt} \langle e (\bm{u}) \left | T_t (W(\bm{z})) | e (\bm{v}) 
\rangle \right |_{t=0}} \\
&=& \langle  e (\bm{u}) \left | \mathcal{L} (W(\bm{z})) \right | e(\bm{v})    
\rangle \quad \forall \,\,\bm{z}, \bm{u}, \bm{v} \in \mathbb{C}^n,
\end{eqnarray*}
where 
\begin{eqnarray*}
 \mathcal{L} (W(\bm{z})) &=& \left \{a^{\dagger} (R^{-1} K R\bm{z} ) - a 
(R^{-1} K R \bm{z}) \right .\\
&& \left .+ \frac{1}{2} \left \{ \langle R^{-1} KR \bm{z}|z \rangle \right .   
- 
\langle \bm{z} | R^{-1} KR \bm{z} \rangle - (R\bm{z})^T C R \bm{z}  \right \} W 
(\bm{z}).
\end{eqnarray*}
\end{proposition}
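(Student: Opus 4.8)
The plan is to reduce the matrix element on the left to an explicit scalar function of $t$, differentiate it by elementary calculus, and then match the result against a direct evaluation of the right-hand side using the action of the creation and annihilation operators on exponential vectors.

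First I would use the explicit formula \eqref{eq6.9} for $T_t(W(\bm{z}))$. Writing $\bm{z}_t = R^{-1} e^{tK} R \bm{z}$, formula \eqref{eq6.9} gives $T_t(W(\bm{z})) = W(\bm{z}_t)\,\exp\{-\tfrac{1}{2}(R\bm{z})^T(\int_0^t e^{sK^T} C e^{sK}\,ds)R\bm{z}\}$. Combining the Weyl action \eqref{eq2.5} with the exponential inner product \eqref{eq2.3} yields $\langle e(\bm{u})|W(\bm{z}_t)|e(\bm{v})\rangle = \exp\{-\tfrac{1}{2}\|\bm{z}_t\|^2 - \langle \bm{z}_t|\bm{v}\rangle + \langle \bm{u}|\bm{z}_t + \bm{v}\rangle\}$, so that $\langle e(\bm{u})|T_t(W(\bm{z}))|e(\bm{v})\rangle = \exp\{\Phi(t)\}$ where $\Phi(t)$ is an explicit, smooth (indeed entire) function of $t$. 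Since this is a genuine scalar function, no functional-analytic subtlety arises in differentiating it. The elementary facts needed at $t=0$ are $\bm{z}_0 = \bm{z}$, the derivative $\tfrac{d}{dt}\bm{z}_t\big|_{t=0} = R^{-1} K R \bm{z} =: \bm{\zeta}$ coming from $\tfrac{d}{dt}e^{tK}|_{t=0}=K$, and, by the fundamental theorem of calculus, $\tfrac{d}{dt}\int_0^t e^{sK^T} C e^{sK}\,ds\big|_{t=0} = C$. Differentiating $\Phi$ then gives $\tfrac{d}{dt}\langle e(\bm{u})|T_t(W(\bm{z}))|e(\bm{v})\rangle|_{t=0} = \{-\tfrac{1}{2}\langle \bm{\zeta}|\bm{z}\rangle - \tfrac{1}{2}\langle \bm{z}|\bm{\zeta}\rangle - \langle \bm{\zeta}|\bm{v}\rangle + \langle \bm{u}|\bm{\zeta}\rangle - \tfrac{1}{2}(R\bm{z})^T C R\bm{z}\}\,\langle e(\bm{u})|W(\bm{z})|e(\bm{v})\rangle$.

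Finally I would evaluate the right-hand side directly. Using $W(\bm{z})e(\bm{v}) = e^{-\|\bm{z}\|^2/2 - \langle \bm{z}|\bm{v}\rangle}e(\bm{z}+\bm{v})$, the relation $a(\bm{\zeta})e(\bm{w}) = \langle \bm{\zeta}|\bm{w}\rangle e(\bm{w})$, and the adjoint identity $\langle e(\bm{u})|a^{\dagger}(\bm{\zeta})e(\bm{w})\rangle = \langle \bm{u}|\bm{\zeta}\rangle\langle e(\bm{u})|e(\bm{w})\rangle$, the operator part $a^{\dagger}(\bm{\zeta})-a(\bm{\zeta})$ of $\mathcal{L}(W(\bm{z}))$ contributes $\{\langle \bm{u}|\bm{\zeta}\rangle - \langle \bm{\zeta}|\bm{z}\rangle - \langle \bm{\zeta}|\bm{v}\rangle\}\langle e(\bm{u})|W(\bm{z})|e(\bm{v})\rangle$, while its scalar part contributes $\tfrac{1}{2}\{\langle \bm{\zeta}|\bm{z}\rangle - \langle \bm{z}|\bm{\zeta}\rangle - (R\bm{z})^T C R\bm{z}\}\langle e(\bm{u})|W(\bm{z})|e(\bm{v})\rangle$. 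Adding these and simplifying $-\langle \bm{\zeta}|\bm{z}\rangle + \tfrac{1}{2}\langle \bm{\zeta}|\bm{z}\rangle = -\tfrac{1}{2}\langle \bm{\zeta}|\bm{z}\rangle$ reproduces exactly the bracket found in the previous step, which finishes the identification.

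The one point to handle with care is the complex-linear structure rather than any analytic difficulty: the map $\bm{z}\mapsto \bm{z}_t$ is only real-linear, since $e^{tK}$ acts on the real coordinate vector $R\bm{z}=(\bm{x}^T,\bm{y}^T)^T$, so $\bm{\zeta}$ need not be a complex multiple of $\bm{z}$ and the two Hermitian terms $\langle \bm{\zeta}|\bm{z}\rangle$ and $\langle \bm{z}|\bm{\zeta}\rangle$ must be kept separate rather than merged into $2\Re\langle \bm{\zeta}|\bm{z}\rangle$. Keeping the conjugate-linearity of the inner product in its first slot consistent throughout (so that $a(\cdot)$ is antilinear and $a^{\dagger}(\cdot)$ linear, as in \eqref{eq2.13}) is precisely what makes the two computations align term by term.
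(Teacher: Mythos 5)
Your proposal is correct and is essentially the computation the paper itself gestures at: the paper's proof consists of the single remark that the result follows by ``straightforward differentiation using the definitions of $a(\cdot)$, $a^{\dagger}(\cdot)$, $W(\cdot)$'' together with the relation $W(\bm{u})a(\bm{v})W(\bm{u})^{-1}=a(\bm{v})-\langle\bm{v}|\bm{u}\rangle$, and you carry out exactly that differentiation, merely substituting the equivalent direct action of $W$, $a$, $a^{\dagger}$ on exponential vectors for the stated commutation relation. Your closing caution about the real-linearity of $\bm{z}\mapsto R^{-1}e^{tK}R\bm{z}$ and the need to keep $\langle\bm{\zeta}|\bm{z}\rangle$ and $\langle\bm{z}|\bm{\zeta}\rangle$ separate is well placed and consistent with the form of $\mathcal{L}$ in the statement.
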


\begin{proof}
 This is straightforward differentiation using the definitions of the operators 
$a(\cdot),$ $a^{\dagger}(\cdot),$ $W(\cdot)$ and the commutation relation $W 
(\bm{u}) a(\bm{v}) W(\bm{u})^{-1} = a (\bm{v})- \langle \bm{v} | \bm{u} 
\rangle$ for all $\bm{u}, \bm{v}$ in $\mathbb{C}^n.$
\end{proof}

\begin{proposition}\label{prop6.3}
Let $\{T_t\}$ be the completely positive semigroup determined by the pair 
$(K,C)$ in Proposition \ref{prop6.1} by \eqref{eq6.9}. For any state $\rho$ in 
$\Gamma (\mathbb{C}^n)$ let $T_t^{\prime} (\rho)$ be the state defined by
$$\Tr \,\,T_t^{\prime} (\rho) W(\bm{z}) = \Tr \, \rho \, T_t (W(\bm{z})) \quad 
\forall \,\,\bm{z} \in \mathbb{C}^n.$$
Denote by $\rho_g (\bm{\ell}, \bm{m}, S)$ the Gaussian state with momentum mean 
vector $\bm{\ell},$ position mean vector $\bm{m}$ and covariance matrix $S.$ 
Then
$$T_t^{\prime} \left ( \rho_g (\bm{\ell}, \bm{m}, S) \right )  = \rho_g 
(\bm{\ell}_t, \bm{m}_t, S_t)$$
where
\begin{eqnarray*}
 \left [ \begin{array}{c} \bm{\ell}_{t} \\ -\bm{m}_{t} \end{array} \right ] &=&
e^{tK^{T}} \left [\begin{array}{c} \bm{\ell} \\ - \bm{m}\end{array} \right ], \\
S_t &=& e^{tK^T} S e^{tK} + \frac{1}{2} \int_0^t e^{sK^{T}} C e^{sK} ds, \,\, 
t 
\ge 0.
\end{eqnarray*}
\end{proposition}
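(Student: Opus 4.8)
The plan is to reduce the identity $T_t'(\rho_g(\bm{\ell},\bm{m},S)) = \rho_g(\bm{\ell}_t,\bm{m}_t,S_t)$ to a single computation of quantum Fourier transforms. A state on $\Gamma(\mathbb{C}^n)$ is determined by the function $\bm{z}\mapsto\Tr\,\rho\,W(\bm{z})$, so by the defining relation $\Tr\,T_t'(\rho)\,W(\bm{z}) = \Tr\,\rho\,T_t(W(\bm{z}))$ it suffices to show that, for $\rho = \rho_g(\bm{\ell},\bm{m},S)$, the right-hand side agrees with the Gaussian form \eqref{eq6.1} evaluated for the data $(\bm{\ell}_t,\bm{m}_t,S_t)$; uniqueness of the quantum characteristic function then yields the claim.

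First I would insert the explicit action \eqref{eq6.9} of $T_t$. The structural point is that $T_t(W(\bm{z}))$ is a \emph{scalar} multiple of the single Weyl operator $W(R^{-1}e^{tK}R\bm{z})$, the scalar being $\exp\{-\tfrac{1}{2}(R\bm{z})^{T}(\int_0^t e^{sK^{T}}Ce^{sK}\,ds)\,R\bm{z}\}$, with $R$ the real-linear identification \eqref{eq6.4}. Hence the trace against $\rho$ factorises as this scalar times $\widehat{\rho}(\bm{w})$, where $\bm{w} = R^{-1}e^{tK}R\bm{z}$, equivalently $R\bm{w} = e^{tK}R\bm{z}$. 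Into $\widehat{\rho}(\bm{w})$ I then substitute the Gaussian form \eqref{eq6.1}.

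The rest is algebraic regrouping of the exponent. For the linear part, \eqref{eq6.1} contributes $-i\sqrt{2}\,(\bm{\ell}^{T},-\bm{m}^{T})\,R\bm{w}$; using $R\bm{w} = e^{tK}R\bm{z}$ and transposing $e^{tK}$ onto the mean turns this into $-i\sqrt{2}\,(e^{tK^{T}}(\bm{\ell}^{T},-\bm{m}^{T})^{T})^{T}R\bm{z}$, so the transformed momentum–position mean is precisely $e^{tK^{T}}$ applied to the old one, giving the stated $\bm{\ell}_t,\bm{m}_t$. For the quadratic part, \eqref{eq6.1} contributes $-(R\bm{w})^{T}S\,R\bm{w} = -(R\bm{z})^{T}e^{tK^{T}}Se^{tK}R\bm{z}$, and adding the scalar factor from $T_t$ collects the whole quadratic exponent into $-(R\bm{z})^{T}S_t\,R\bm{z}$ with $S_t = e^{tK^{T}}Se^{tK} + \tfrac{1}{2}\int_0^t e^{sK^{T}}Ce^{sK}\,ds$, matching the claim. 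Comparing with \eqref{eq6.1} identifies the whole expression as $\widehat{\rho_g(\bm{\ell}_t,\bm{m}_t,S_t)}(\bm{z})$.

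I do not anticipate a real obstacle: the content is a direct computation, and the only care required is bookkeeping with the identification $R$ and the transpose that carries $e^{tK}$ from the Weyl argument onto the mean and covariance. The one point worth confirming is that the output is a \emph{bona fide} Gaussian state, i.e.\ that $S_t$ still satisfies the uncertainty inequality \eqref{eq6.2}. This is automatic because $T_t$ is completely positive and unital (from \eqref{eq6.9}, $T_t(I) = T_t(W(0)) = I$), so its predual $T_t'$ sends states to states; alternatively, the inequality for $S_t$ can be read off directly from condition \eqref{eq6.8} via the integral-representation argument preceding Proposition~\ref{prop6.1}.
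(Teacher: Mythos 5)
Your proposal is correct and follows the same route the paper itself takes: the paper leaves Proposition \ref{prop6.3} without a formal proof, but the computation it relies on is exactly the one you carry out, namely the evaluation of $\Tr\,\rho\,T_t(W(\bm{z}))$ via \eqref{eq6.9} and \eqref{eq6.1} already displayed (for the mean-zero case) in the discussion preceding Proposition \ref{prop6.1}. Your bookkeeping of the linear term through $R\bm{w}=e^{tK}R\bm{z}$ and the closing remark that $S_t$ remains an admissible covariance matrix (by complete positivity, or directly from \eqref{eq6.8}) are both sound.
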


\begin{remark}
 It is known from a general theory \cite{3} that $\{T_t\}$ can be dilated to a 
quantum Markov process. If the initial state of the Markov process is Gaussian 
it follows that the state of the system at any time $t$ is Gaussian. We may 
call the dilation of $\{T_t\}$ a quantum Gaussian Markov process.

We shall now present examples when the dilated process  can be obtained by a 
unitary evolution driven by a quantum stochastic differential equation.
\end{remark}

Consider the Hilbert space
$$\widehat{\mathcal{H}} = \Gamma (\mathbb{C}^n) \otimes \Gamma (L^2 
(\mathbb{R}_{+})).$$
Let 
\begin{equation}
L = a (\bm{u}) + a^{\dagger} (\bm{v}),   \label{eq6.10}
\end{equation}
where  $\bm{u}, \bm{v}$ are fixed in $\mathbb{C}^n.$ Then $L$ is an operator in 
the initial Fock space $\Gamma (\mathbb{C}^n).$ Let
\begin{eqnarray*}
 A(t) &=& a (1_{[0,t]}), \\
A^{\dagger}(t) &=& a^{\dagger} (1_{[0,t]}), \quad  t \ge 0
\end{eqnarray*}
be the annihilation and creation processes in the Fock space $\Gamma (L^2 
(\mathbb{R}_{+})).$ Following the notations of Section 5 consider the 
specialized  form of the stochastic differential equation \eqref{eq5.27}, 
Section 5:
\begin{equation}
d U(t) = (L d A^{\dagger} - L^{\dagger} d A - \frac{1}{2} L^{\dagger} L dt ) U 
(t), t \ge 0   \label{eq6.11}
\end{equation}
with $U(0)=I$ where $L$ is now the unbounded operator given by \eqref{eq6.10}. 
Then, by Fagnola's theorem \cite{4} it follows that there exists a unique 
adapted unitary operator-valued process satisfying \eqref{eq6.11} and therefore 
one can define the operators
$$j_t (X) = U(t)^{\dagger} (X \otimes 1) U(t), t \ge 0$$
for any bounded operator $X$ in $\Gamma (\mathbb{C}^n).$ If we choose 
$X=W(\bm{z}), \bm{z} \in \mathbb{C}^n,$ then the Markovian dynamical semigroup 
$\{T_t^L, t \ge 0\}$ determined by the flow $\{j_t\}$ satisfies the Lindblad 
equation
\begin{eqnarray*}
 \frac{d}{dt} T_t^L  \left .(W(\bm{z})) \right |_{t=0} &=& - \frac{1}{2} \left 
\{L^{\dagger} L W(\bm{z}) + W(\bm{z}) L^{\dagger} L - 2 L^{\dagger} W(\bm{z}) 
L\right \} \\
&=& - \frac{1}{2} \left \{L^{\dagger} [L, W(\bm{z})] + [W(\bm{z}), L^{\dagger}] 
L  \right \}.
\end{eqnarray*}
The commutation relations between $a(\bm{u})$ and $W(\bm{z})$ lead to the 
relations:
\begin{eqnarray}
\lefteqn{ \frac{d}{dt} \left .\left (T_t^L( W(\bm{z}) \right ) \right |_{t=0}   
      } \nonumber \\  
&=& \left \{ a^{\dagger} \left (\frac{\overline{\lambda (\bm{z}) }\bm{v} - 
\lambda (\bm{z})\bm{u}}{2} \right ) - a  \left (\frac{\overline{\lambda 
(\bm{z}) }\bm{v} - \lambda 
(\bm{z})u}{2}  \right ) - \frac{1}{2} |\lambda (\bm{z})|^2 
\right \} W (\bm{z})            \label{eq6.12}
\end{eqnarray}
for all $\bm{z} \in \mathbb{C}^n,$ where

\begin{equation}
\lambda(\bm{z}) = \langle \bm{u}|\bm{z}  \rangle + \langle \bm{z} | 
 \bm{v}
\rangle   \label{eq6.13}
\end{equation}
and the equations are to be understood in the weak sense on the exponential 
domain $\mathcal{E},$ the linear manifold generated by exponential vectors in 
$\Gamma (\mathbb{C}^n).$ Comparing \eqref{eq6.12} and the equation for the 
generator $\mathcal{L}$ in Proposition \ref{prop6.2} we see that $\mathcal{L} 
(W(\bm{z}))$ coincides with the right hand side of \eqref{eq6.12} if the 
following hold for all $\bm{z} \in \mathbb{C}^n$:

\begin{eqnarray}
R^{-1} KR \bm{z} &=& \frac{\overline{\lambda (\bm{z})} \bm{v} - \lambda 
(\bm{z}) \bm{u}}{2},   \label{eq6.14} \\
\frac{1}{2} \left \{ \langle R^{-1} K R \bm{z} | \bm{z}  \rangle \right . &-& 
\left . \langle    \bm{z}  | R^{-1} KR  \bm{z}  \rangle - (R \bm{z}  )^T C R  
\bm{z} \right \} = - \frac{1}{2} \left | \lambda(\bm{z})\right |^2  
\label{eq6.15}
\end{eqnarray}
where $\lambda(\bm{z})$ is given by \eqref{eq6.13} and $(K,C)$ is the pair of 
$2n \times 2n$ real matrices in Proposition \ref{prop6.1}. Solving for the pair 
$(K,C)$ in \eqref{eq6.14} and \eqref{eq6.15} for a given pair $\bm{u}, \bm{v}$ 
in $\mathbb{C}^n$ 
one obtains the following

\begin{eqnarray}
K &=&  \left [\begin{array}{cc}  - \Re 
(\overline{\bm{u}}-\bm{v})(\bm{u} + \overline{\bm{v}})^T & -\Im 
(\overline{\bm{u}}-\bm{v})(\bm{u} - \overline{\bm{v}})^T 
\\ \Im  (\overline{\bm{u}}+\bm{v})(\bm{u} + \overline{\bm{v}})^T & -\Re 
(\overline{\bm{u}}+\bm{v})(\bm{u} -
\overline{\bm{v}})^T \end{array} \right ]   \label{eq6.16} \\
C &=& \left [\begin{array}{cc}\Re (\bm{u} +\overline{\bm{v}}) ( 
\overline{\bm{u}} + \bm{v})^T & -\Im  
(\bm{u}+\overline{\bm{v}})(\overline{\bm{u}}-\bm{v})^T \\
\Im (\bm{u}- \overline{\bm{v}})(\overline{\bm{u}} + \bm{v})^T & \Re 
(\bm{u}-\overline{\bm{v}} )(\overline{\bm{u}} - \bm{v})^T
\end{array} \right ],   
\label{eq6.17}
\end{eqnarray}
and
\begin{equation}
C + i \left ( K^T J_{2n} + J_{2n} K \right ) =  \left [\begin{array}{c} 
\bm{u}+\overline{\bm{v}} \\ -i (\bm{u}-\overline{\bm{v}})  
\end{array} \right ] \left [\begin{array}{c} 
\bm{u}+\overline{\bm{v}} \\ -i (\bm{u}-\overline{\bm{v}})  
\end{array} \right ] ^{\dagger} \ge 0. \label{eq6.18}
\end{equation}
We denote by $\left (K (\bm{u}, \bm{v}), C (\bm{u}, \bm{v}) \right )$ the pair 
$(K,C)$ determined by \eqref{eq6.16} and \eqref{eq6.17}. Thus the Markov 
process determined by the pair $(K(u,v), C (u,v))$  according to the semigroup 
$\{T_t\}$ in  \eqref{eq6.9}  of 
Proposition \ref{prop6.1} is completely described by the noisy Schr\"odinger 
equation
\begin{eqnarray}
dU(t) &=& \left \{ (a (\bm{u}) + a^{\dagger} (\bm{v}) ) dA^{\dagger} - 
(a^{\dagger} (\bm{u}) + a ( \bm{v})) dA \right . \nonumber \\
&&  \left . -\frac{1}{2} (a^{\dagger} (\bm{u}) + a (\bm{v})) (a (\bm{u}) + 
\bm{a}^{\dagger} ( \bm{v})) dt \right \} U(t).  \label{eq6.19}
\end{eqnarray}
with $U(0)=I.$

We shall now consider the general case of a pair $(K,C)$ of $2n \times 2n$ real 
matrices satisfying the condition \eqref{eq6.8} and show how the semigroup 
$\{T_t\}$ in \eqref{eq6.9} can be dilated to a quantum Markov process driven 
by a noisy Schr\"odinger equation. To this end we write
$$C + i (K^T J_{2n}+J_{2n} K) = \sum_{j=1}^r \left [\begin{array}{c} \bm{b}_j 
\\ 
\bm{c}_j \end{array} \right ] \left [\begin{array}{c} \bm{b}_j \\ 
\bm{c}_j \end{array} \right ]^{\dagger}$$
where $\left \{ \left [\begin{array}{c} \bm{b}_j \\ 
\bm{c}_j \end{array} \right ], 1 \leq j \leq r \right \}$ is a set of mutually 
orthogonal vectors in $\mathbb{C}^{2n}$ so that $\bm{b}_j, \bm{c}_j$ belong to 
$\mathbb{C}^n$ and $r$ is the rank of the left hand side when it is not the 
zero matrix. If the left hand side vanishes then $C=0$ and $K$ is an element of 
the Lie algebra $sp(2n).$ Now write

\begin{equation}
L_j = a (\bm{b}_j) + a^{\dagger} ( \bm{c}_j), 1 \leq j \leq r \quad 
\mbox{if}\,\, r \neq 0.  \label{eq6.20}
\end{equation}
and define the matrices $K(\bm{u}_j, \bm{v}_j),$ $C(\bm{u}_j, \bm{v}_j)$ as the 
right hand side of \eqref{eq6.16} and \eqref{eq6.17} respectively with $\bm{u}, 
\bm{v}$ replaced by $\bm{u}_j,\bm{v}_j$ respectively and  $\bm{u}_j, \bm{v}_j$ 
are determined by the equations
\begin{eqnarray*}
 \bm{u}_j + \overline{\bm{v}_j} &=& \bm{b}_j, \\
\bm{u}_j - \overline{\bm{v}_j} &=& i\bm{c}_j
\end{eqnarray*}

Then
\begin{eqnarray}
 C &=& \sum_{j=1}^r C (\bm{u}_j,\bm{v}_j)  \label{eq6.21} \\ 
 K &=& \sum_{j=1} ^{r} K (\bm{u}_j,\bm{v}_j) + K^{\prime}    \label{eq6.22}
\end{eqnarray}
where $K^{\prime} \in sp (2n).$ Then $\{e^{tK^{\prime}}, t \in \mathbb{R}\}$ is 
a one parameter group in $Sp(2n).$ 

Now we observe that equation (\ref{eq6.16}) implies that the
matrix $J_{2n} K(\bm{u},\bm{v})$ is skew symmetric and hence
it follows that $J_{2n}\sum_{j=1}^r K(\bm{u}_j, \bm{v}_j)$
is skew symmetric. Write 
\begin{eqnarray}
M &=& J_{2n} \sum_{j=1}^r  K(\bm{u}_j, \bm{v}_j),
\label{eq6.23}\\
N &=& J_{2n}K'. \label{eq6.24}
\end{eqnarray}
Then $M$ is skew symmetric, $N$ is symmetric and $J_{2n}K=M+N$.
Thus $M$ and $N$ are given by
\begin{eqnarray}
M &=& \frac{J_{2n}K - (J_{2n}K)^T}{2}, \label{eq6.25}\\
N &=& \frac{J_{2n}K + (J_{2n}K)^T}{2}. \label{eq6.26}
\end{eqnarray}
Consider the spectral decomposition of $N$
\begin{equation}
N=\sum_{j=1}^{r'} \lambda_j 
\begin{bmatrix} 
\bm{\beta}_j \\ \bm{\gamma}_j 
\end{bmatrix} 
\begin{bmatrix} 
\bm{\beta}_j \\ \bm{\gamma}_j 
\end{bmatrix}^T, \qquad \bm{\beta}_j,~ \bm{\gamma}_j \in \mathbb{R}^n \label{eq6.27}
\end{equation}
where $\lambda_j$'s are the nonzero eigenvalues of $N$ and $\begin{bmatrix} 
\bm{\beta}_j \\ \bm{\gamma}_j 
\end{bmatrix}$, the respective eigenvectors. Put
\begin{equation}
\bm{w}_j = \bm{\beta}_j + i \bm{\gamma}_j, \qquad 1 \leq j \leq r'.
\label{eq6.28}
\end{equation}
Define the self-adjoint operator $H$ by   
\begin{equation}
H=\frac{1}{4} \sum_{j=1}^{r'} \lambda_j (a(\bm{w}_j)
+a^\dag (\bm{w}_j))^2. \label{eq6.29} 
\end{equation}
A simple (but tedious) algebra using CCR shows that 
\begin{multline}
-i[H,W(\bm{z})] =\left\{ a^\dag(R^{-1}K'R\bm{z})
-a(R^{-1}K'R\bm{z}) \right. \\ \left. +\frac{1}{2} (\langle
R^{-1}K'R\bm{z}| \bm{z}
\rangle - \langle \bm{z} | R^{-1}K'R\bm{z} \rangle ) \right\}
W(\bm{z}), \label{eq6.30}
\end{multline}
where $K'$ is given by Equation (\ref{eq6.24}). 

\par Now consider
the noisy Schr\"odinger equation
\begin{equation}
dU(t)=\left\{ \sum_{j=1}^r \left(L_jdA_j^\dag -L_j^\dag
dA_j\right) - \left( iH + \frac{1}{2} \sum_{j=1}^r L_j^\dag
L_j \right)dt \right\}
U(t),
\label{eq6.31}
\end{equation}
where $L_j$ is given by Equation (\ref{eq6.20}), $H$ is
defined by (\ref{eq6.26} -- \ref{eq6.29}), and $A_j,~ A_j^\dag$
are the annihilation and creation noise processes in the
boson Fock space
\[\hat{\mathcal{H}} = \Gamma(\mathbb{C}^n) \otimes
\Gamma(L^2(\mathbb{R}_+) \otimes \mathbb{C}^r),\]
$r$ being the rank of the matrix $C + i (K^T J + JK)$. When
$r=0,~ \hat{\mathcal{H}} =\Gamma(\mathbb{C}^n)$ and
(\ref{eq6.31}) reduces to the Schr\"odinger's equation 
\[ dU(t) = -i H U(t) dt,\]
$C=0$, $K$ ( $=K'$) is symplectic and $H$ is determined by
(\ref{eq6.29}). (In this special case the formula for $H$
seems to be more precise than what is presented in
\cite{2}.)
 It now follows from Fagnola's theorem \cite{4} for
the case of finite number of degrees of freedom that there
exists a unitary solution $\{ U(t), t\geq 0\}$ for the
equation (\ref{eq6.31}) and 
%--------------------------old version 
%By Stone-von Neumann theorem we 
%know that there exists a one parameter group $\{e^{-itH}, t \in \mathbb{R}\}$ 
%of unitary operators in $\Gamma (\mathbb{C}^n)$ satisfying the relations
%\begin{equation}
%e^{-itH} W( \bm{z}) e^{itH} = W (R^{-1} e^{tK^{\prime}} R \bm{z}) \quad \forall 
%\bm{z} \in \mathbb{C}^n, t \in \mathbb{R}  \label{eq6.23}
%\end{equation}
%where $R$ is defined by \eqref{eq6.4}. Now observe from \cite{2} that the 
%selfadjoint operator $H$ can be expressed as a second degree polynomial in the 
%annihilation and creation operators $a_j, a_j^{\dagger},$ $1 \leq j \leq n$ in 
%$\Gamma (\mathbb{C}^n).$ Conversely, given any selfadjoint operator $H$ of this 
%form one obtains the relation \eqref{eq6.23} for some element $K^{\prime}$ in 
%$sp(2n).$

%Now consider the noisy Schr\"odinger equation
%$$d U(t) = \left \{ \sum_{j=1}^{r} (L_j dA_j^{\dagger} - L_j^{\dagger} d A_j) - 
%(iH + \frac{1}{2} \sum_{j=1}^{r} L_j^{\dagger} L_j) dt\right \} U(t) $$
%where $L_j$ is given by \eqref{eq6.20} and $H$ is the selfadjoint operator 
%satisfying \eqref{eq6.23} with $K^{\prime}$ as in \eqref{eq6.22}. Then by 
%Fagnola's theorem for the case of finite number of degrees of freedom we 
%observe that

%-----------old version------------------

$$T_t (X) = \mathbb{E}_{0]} U(t)^{\dagger} (X \otimes 1) \, U(t), \quad t \ge 
0$$
where $\mathbb{E}_{0]}$ is the Fock vacuum conditional expectation of 
quantum stochastic calculus. Thus we have described completely the dilation of 
the completely positive semigroup in Proposition \ref{prop6.1}. In view of 
Proposition \ref{prop6.3}, such a dilation is a quantum Gaussian Markov process.

\vskip0.1in

\end{document}